\documentclass{article}

\usepackage{arxiv}

\usepackage[utf8]{inputenc} 
\usepackage[T1]{fontenc}    
\usepackage{hyperref}       
\usepackage{url}            
\usepackage{booktabs}       
\usepackage{amsfonts}      
\usepackage{amsthm}  
\usepackage{nicefrac}       
\usepackage{microtype}      
\usepackage{lipsum}
\usepackage{graphicx}
\graphicspath{ {./images/} }

\usepackage{float}
\usepackage{amsmath}
\usepackage{graphicx}
\usepackage{natbib}
\usepackage{caption}
\usepackage{subcaption}
\usepackage[table]{xcolor} 

\newtheorem{theorem}{Theorem}[section]
\newtheorem{proposition}[theorem]{Proposition}

\title{Extensions in semiparametric geostatistical models}

\author{
 Maíra Soalheiro \\
  Departamento de Estatística\\
  Universidade Federal de Minas Gerais \\
  \texttt{msoalheiro@ufmg.br } \\
   \And
 Marcos Oliveira Prates \\
  Departamento de Estatística\\
  Universidade Federal de Minas Gerais \\
  \texttt{marcosop@ufmg.br} \\
  \And
 Victor Hugo Lachos \\
  Department of Statistics\\
  University of Connecticut \\
  \texttt{hlachos@uconn.edu } \\
   \And
 Fábio Nogueira Demarqui \\
  Departamento de Estatística\\
  Universidade Federal de Minas Gerais \\
  \texttt{fndemarqui@est.ufmg.br} \\
}

\begin{document}
\maketitle
\begin{abstract}
In spatial statistics, the incorrect selection of an appropriate covariance function may lead to inference errors and confidence underestimation. Motivated by such restrictions, we introduce and evaluate a flexible semiparametric approach for estimating spatial covariance functions based on Bernstein polynomials. The proposed formulation is general and applicable to classes of models that incorporate latent spatial effects in georeferenced data, such as Spatial Generalized Linear Mixed Models. Empirical validation was conducted via Monte Carlo simulations, and model fitting was performed using Bayesian inference via Markov chain Monte Carlo. Simulated scenarios demonstrated the model's ability to recover structural covariance configurations with low bias and high parameter precision. The practical applicability of the methodology was tested using real abundance data for American Robin (\textit{Turdus migratorius}) from the North American Breeding Bird Survey. The proposed model, featuring a Negative Binomial structure, yielded satisfactory results, efficiently capturing the overdispersion inherent in the count data. The estimated range parameter of 381.48 km revealed that the species' spatial dependence operates at a regional scale, suggesting that unobserved ecological processes act homogeneously within this radius of environmental influence. Additionally, predictive validation using an independent sample ($n_{\text{pred}} = 34$) demonstrated the model's strong generalization capability via Bayesian Kriging, producing point projections that closely matched observed values and well-calibrated prediction intervals. It is concluded that the proposed approach represents a robust methodological advancement, establishing itself as a flexible and efficient tool.
\end{abstract}

\keywords{Bernstein polynomials, Spatial covariance functions, Spatial Generalized Linear Mixed Models, Bayesian Kriging, Semiparametric spatial models, Geostatistics.}

\section{Introduction}

Spatial statistics plays a fundamental role in analyzing complex phenomena across various fields, such as epidemiology, agronomy, health, and ecology. It is characterized by the exploration of datasets exhibiting spatial or spatio-temporal dependence, a structure that, if overlooked, compromises the quality of inference. Among the classes of spatial data, this work focuses on geostatistical (or point-level) data, which aim to quantify the relationship between variables of interest based on their geographic locations \citep{banerjee2015book}. Traditionally, modeling these stochastic processes relies on the classical geostatistical formulation using isotropic parametric covariance functions, such as the Matérn and power exponential \citep{cressie1993, diggle2007}. Although analytically appealing, these classical families may prove insufficient to capture the stochastic complexity of certain natural phenomena, potentially leading to systematic inference errors and an underestimation of predictive uncertainty in kriging maps.

To overcome these limitations, non-parametric and semi-parametric approaches based on kernel smoothing and spectral representations have gained prominence \citep{hall1994nonparametric, genton2002nonparametric}. From a Bayesian perspective, Non-Parametric Bayesian Geostatistics (NPBG) utilizing Spatial Dirichlet Processes \citep{gelfand2005spatial}, has emerged as a robust solution for capturing non-stationarity and asymmetries. However, these techniques often face high computational complexity, difficulties in enforcing fundamental theoretical constraints (such as the positive definiteness of the covariance matrix), and slow convergence of Markov chain Monte Carlo (MCMC) algorithms when applied to large datasets. In this context, approaches based on polynomial approximations emerge as a mathematically simple and computationally feasible alternative. In particular, Bernstein polynomials (BPs) have gained traction in the statistical literature due to the ease of directly managing shape constraints via simple linear constraints on their coefficients \citep{book:5337, manchuk2008experimental}.

Recently, building on this line of research, \citet{wang2023nonparametric} introduced, in the spectral domain, a nonparametric, isotropic covariance function based on Bernstein polynomials for Gaussian processes within the frequentist paradigm. However, the author did not fully explore intrinsic properties of BPs that are particularly advantageous for Bayesian modeling, as imposing monotonicity and convexity leads to significantly more stable and efficient MCMC sampling compared to traditional techniques based on mixtures or spectral densities, allowing for straightforward implementation in probabilistic programming platforms such as \texttt{rstan} \citep{rstan2026}. Furthermore, the current literature lacks flexible hierarchical extensions that transition from the strictly linear Gaussian case to more complex modeling structures.

In this article, we propose a flexible and computationally efficient Bayesian semiparametric approach for estimating spatial covariance functions based on Bernstein polynomials on the original observational domain. Unlike the approach of \citet{wang2023nonparametric}, which is in the spectral domain and limited to the classical linear Gaussian model, the proposed formulation is general and applicable to any class of models incorporating latent spatial effects in point-level data, such as spatial generalized models, spatial survival models, or spatial quantile regression. To demonstrate the method's practical applicability and flexibility in non-Gaussian scenarios, the methodology is presented in the context of Generalized Linear Mixed Models (GLMMs) and applied to the analysis of real data from the Breeding Bird Survey (BBS), considering spatial modeling under Poisson, Negative Binomial, and Bell distributions. We demonstrate that models fitted using the proposed covariance function recover the parameters of interest and appear to be the best option when the true data covariance is unknown. Furthermore, our alternative is shown to be robust to model misspecification, providing reliable prediction estimates and intervals and thus demonstrating good generalization capability.

The remainder of this article is organized as follows. Section \ref{sec:preliminares} reviews the fundamental concepts of spatial Gaussian processes, Bernstein polynomials, and spatial hierarchical models. Section \ref{sec:metodologia} presents the proposed non-parametric spatial covariance model. Section \ref{sec:sim} presents a simulation study to evaluate the performance of our approach. In Section \ref{sec:aplicacao}, the proposed framework is applied to real spatial data. Finally, Section \ref{sec:conclusao} provides concluding remarks and directions for future research.


\section{Preliminaries}\label{sec:preliminares}


\subsection{Spatial Gaussian Processes}

Let $\{Y(\mathbf{s}): \mathbf{s} \in \boldsymbol{\mathcal{D}}\}$ be a spatial stochastic process defined on a continuous domain $\boldsymbol{\mathcal{D}} \subset \mathbb{R}^r$. The process is termed a Gaussian Process (GP) if, for any set of locations $\{\mathbf{s}_1, \dots, \mathbf{s}_n\}$, the observed vector $\mathbf{Y} = (Y(\mathbf{s}_1), \dots, Y(\mathbf{s}_n))^{\top}$ follows a multivariate normal distribution $\mathbf{Y} \sim \mathcal{N}_n(\boldsymbol{\mu}, \mathbf{\Sigma})$, with mean vector $\boldsymbol{\mu}$ and spatial covariance matrix $\mathbf{\Sigma}$ \citep{banerjee2015book}.

Assuming second-order stationarity and isotropy, the mean is constant ($E(Y(\mathbf{s})) = \mu$) and the covariance between two points depends exclusively on the Euclidean distance $\Vert{}\mathbf{d}\Vert{} = \Vert{}\mathbf{s}_i - \mathbf{s}_j\Vert{}$, expressed by the kernel function $C(\Vert{}\mathbf{d}\Vert{}) = Cov(Y(\mathbf{s}_i), Y(\mathbf{s}_j))$.

The structure of the covariance function $C(\Vert{}\mathbf{d}\Vert{})$ governs the stochastic behavior and differentiability of the process curves; thus, incorrect kernel specifications can lead to inference errors and distortions in smoothness. Usually, $C(\Vert{}\mathbf{d}\Vert{})$ is parameterized by: $\sigma^2 > 0$: spatial variance (partial sill); $\tau^2 \ge 0$: nugget effect, representing measurement error or micro-scale variability; and $\phi > 0$: decay parameter governing the spatial range. For $C(\Vert{}\mathbf{d}\Vert{})$ to define a valid GP, it must be a positive-definite function, ensuring that:
$$\sum_{i=1}^n \sum_{j=1}^n a_i a_j C(\mathbf{s}_i - \mathbf{s}_j) > 0$$
for any non-zero constant vector $\mathbf{a}$. This guarantee of strict positivity ensures that the spatial covariance matrix $\mathbf{\Sigma}$ is non-singular, invertible, and well-conditioned, an indispensable requirement for computing the Gaussian likelihood and for sampling stability via MCMC methods.

\subsection{Bernstein Polynomials}

Bernstein polynomials provide a flexible basis for the smooth, non-parametric approximation of continuous functions on a bounded interval, grounded in the Weierstrass Approximation Theorem. 
For a function $f(x)$ defined on $x \in [0, 1]$, the Bernstein polynomial of order $m$ is defined as a linear combination of $m+1$ basis functions $b_{k,m}(x)$: $$B_m(x) = \sum_{k=0}^{m} f\left(\frac{k}{m}\right) b_{k,m}(x), \quad \text{where} \quad b_{k,m}(x) = \binom{m}{k} x^k (1-x)^{m-k}$$ as $m \to \infty$, $B_m(x)$ converges uniformly to $f(x)$ \citep{book:5337}. The basis functions $b_{k,m}(x)$ possess analytical properties that are critical for stochastic modeling: non-negativity and Partition of Unity: $b_{k,m}(x) \ge 0$ and $\sum_{k=0}^{m} b_{k,m}(x) = 1$, $\forall x \in [0,1]$; symmetry: $b_{k,m}(x) = b_{m-k,m}(1-x)$; derivative structure: the $r$-th derivative of $B_m(x)$ is expressed in terms of basis functions of order $m-r$: $$B^{(r)}_m(x) = \frac{m!}{(m-r)!} \sum_{k=0}^{m-r} \Delta^r f\left(\frac{k}{m}\right) b_{k,m-r}(x)$$ where $\Delta$ denotes the forward difference operator; positive Linear Operator: preserves the linearity and the monotonicity/positivity of the original function. Compared to other bases (such as B-splines), the PB offers significant advantages for Bayesian inference  \citep{ghosal2017fundamentals}: shape constraints (positivity, monotonicity, and convexity) translate into straightforward linear constraints on the coefficients. Furthermore, the natural link to the Beta density simplifies the specification of prior distributions in MCMC algorithms and ensures the analytical validity of the approximating function. 

According to \cite{osman2012nonparametric}, given a continuous function $F(t)$ on $t \in (0, \tau]$, the rescaled $m$-th order Bernstein polynomial (BP) approximation is given by $B_m(t; F) = \sum_{k=0}^m b_{k,m} B_{k,m}(t)$, where $B_{k,m}(t) = \binom{m}{k} (t/\tau)^k (1 - t/\tau)^{m-k}$. The first derivative $B'_m(t; F) = b_m(t; F)$ expresses a direct relationship with the Beta distribution density $f_B$: $$b_m(t; F) = \sum_{k=1}^{m} \left[ F\left(\frac{k}{m}\tau\right) - F\left(\frac{k-1}{m}\tau\right) \right] \frac{f_B(t/\tau \mid k, m-k+1)}{\tau}$$ To model a function $f(h; \boldsymbol{\gamma})$ with $h \ge 0$, the representation $f(h; \boldsymbol{\gamma}) = \sum_{k=1}^m \gamma_k g_{k,m}(h)$ is adopted, where $\boldsymbol{\gamma} = (\gamma_1, \dots, \gamma_m)^\top$ is the vector of coefficients with $\gamma_k \ge 0$, and $g_{k,m}(h)$ represents the Beta density evaluated at $h/\tau$. Integrating $f(h; \boldsymbol{\gamma})$ yields the approximate cumulative distribution function: $$F(h; \boldsymbol{\gamma}) = \sum_{k=1}^{m} \gamma_k G_{k,m}(h) = \boldsymbol{\gamma}^\top \mathbf{G}_m(h)$$ where $G_{k,m}(h) = F_B(h/\tau \mid k, m-k+1)$ is the c.d.f. and $\mathbf{G}_m(h)$ is a ...c.d.f. and $\gamma_k \ge 0$, the resulting function $F(h; \boldsymbol{\gamma})$ is monotonically increasing by construction.

\subsection{Spatial Hierarchical Model}\label{sec:gshm}

Consider a spatial hierarchical model, and let $\mathbf{s}_i \in \mathcal{D} \subset \mathbb{R}^2$ be the exact geographic coordinates of the $i$-th sampling location, for $i = 1, \ldots, n$. The first level of the hierarchical structure represents the conditional data model and is defined as:
\begin{equation*}
\label{eq:spatial_data_model_gen}
Y(\mathbf{s}_i) \mid \theta(\mathbf{s}_i), \phi \sim f\left(Y(\mathbf{s}_i) \mid \mu(\mathbf{s}_i), \phi\right),
\end{equation*}
where $Y(\mathbf{s}_i)$ denotes the response variable observed at location $\mathbf{s}_i$, and $\phi$ is the dispersion or shape parameter associated with the chosen distribution family $f(\cdot)$. The conditional mean $\mu(\mathbf{s}_i) = E(Y(\mathbf{s}_i))$ is related to the global intercept, the covariates, and the latent spatial effect via a link function $g(\cdot)$:
\begin{equation*}
\label{eq:spatial_linear_predictor_gen}
g(\mu(\mathbf{s}_i)) = \beta_{0} + \mathbf{x}(\mathbf{s}_i)^T \boldsymbol{\beta} + \theta(\mathbf{s}_i),
\end{equation*}
where $\beta_{0}$ represents the global intercept, $\mathbf{x}(\mathbf{s}_i)$ is the vector of covariates observed at coordinate $\mathbf{s}_i$ associated with the coefficient vector $\boldsymbol{\beta}$, and $\theta(\mathbf{s}_i)$ is the latent effect corresponding to spatial dependence.

The second level is given by the spatial component $\theta(\mathbf{s})$; this is modeled as a GP with zero mean and covariance matrix $\boldsymbol{\Sigma}$. This implies that the joint vector of latent spatial effects evaluated at the $n$ sampled locations, $\boldsymbol{\theta} = (\theta(\mathbf{s}_1), \cdots, \theta(\mathbf{s}_n))^T$, follows a multivariate normal distribution:
\begin{equation*}
\label{eq:spatial_process_model_gen}
\boldsymbol{\theta} \mid \boldsymbol{\psi} \sim \mathcal{N}_m\left(\mathbf{0}, \boldsymbol{\Sigma}(\boldsymbol{\psi})\right),
\end{equation*}
where $\boldsymbol{\Sigma}$ is a valid covariance matrix.

\section{Methodology}
\label{sec:metodologia}

Covariance functions form the fundamental basis of geostatistical modeling and Gaussian processes, characterizing the dependence between observations indexed by geographic coordinates \citep{cressie1993}. Although widely used, classical parametric covariance families, such as the exponential, spherical, or Matérn functions, impose rigid constraints on the rate of correlation decay due to their fixed mathematical specifications; consequently, they may fail to adapt to the complex, non-linear behaviors common in real-world data \citep{stein1999interpolation}. To overcome these limitations, non-parametric approaches have gained prominence in recent literature \citep{hall1994nonparametric, wang2023nonparametric}.
Next, we introduce a Bayesian semi-parametric approach based on BPs, which are recognized for their excellent properties regarding smooth function approximation and the handling of shape constraints \citep{book:5337, ghosal2001posterior}. 


\subsection{The Geo-Bernstein Polynomial Covariance Function}

The main contribution of this article is the specification of an isotropic, non-parametric covariance function termed the Geo-Bernstein Polynomial (GBP). Let $\tau$ be the maximum observed distance within the spatial domain under study. We define the proposed covariance function $C(d; \boldsymbol{\gamma}, \sigma^2)$ as: $$C(d;\boldsymbol{\gamma}, \sigma^2)= \left\{\begin{matrix}
\sigma^{2} \exp(- \boldsymbol{\gamma}^{\top} \mathbf{G}_{m}(d)),  & \text{if } 0 \leq d < \tau\\
\sigma^2 \exp \left( - \left(\sum_{k=1}^m  \gamma_k + \frac{m\gamma_m}{\tau} (d-\tau)\right)\right), & \text{if } d \geq \tau,
\end{matrix}\right.$$ where $\sigma^2$ is the conditional spatial variance (partial sill), and $\boldsymbol{\gamma} = (\gamma_{1}, \cdots, \gamma_{m})^{\top}$ is a vector of unknown polynomial coefficients subject to a non-negativity constraint ($\gamma_{k} \geq 0$ for $k = 1, \cdots, m$). The polynomial basis vector is denoted by $\mathbf{G}_{m}(d) = [G_{1}(d), \cdots, G_{m}(d)]^{\top}$, where each element $G_{k}(d) = F_{B}(d/\tau \mid k, m-k+1)$ corresponds to the cumulative distribution function (CDF) of a $Beta(k, m-k+1)$ distribution evaluated at $d/\tau$. The property of decreasing monotonicity of $C(d; \boldsymbol{\gamma}, \sigma^2)$ is guaranteed by the constraints $\boldsymbol{\gamma} \geq 0$ and the non-decreasing nature of $\mathbf{G}_{m}(d)$ for $d > 0$. This behavior preserves Tobler's First Law of Geography \citep{tobler1970computer}, analytically ensuring that:$$\lim_{d \to 0} C(d; \boldsymbol{\gamma}, \sigma^2) = \sigma^2 \quad \text{and} \quad \lim_{d \to +\infty} C(d; \boldsymbol{\gamma}, \sigma^2) = 0.$$Below, we present the fundamental analytical properties that validate the GBP structure within classical geostatistical theory.


\textbf{Property 1 (Behavior at the Origin)}. At zero distance ($d=0$), the GBP covariance function reduces exactly to the process variance, $C(0; \boldsymbol{\gamma}, \sigma^2) = \sigma^2$.

Proof: When $d=0$, we have $d/\tau = 0$. Since the CDF of any Beta distribution with strictly positive parameters evaluated at the origin is zero, it follows that $\mathbf{G}_{m}(0) = \mathbf{0}$. Substituting this result into the model definition yields $C(0; \boldsymbol{\gamma}, \sigma^2) = \sigma^2 \exp(-\boldsymbol{\gamma}^{\top} \cdot \mathbf{0}) = \sigma^2$, ensuring the process's consistency without the need to explicitly include a nugget effect in the latent spatial term. 

\begin{proposition}
\label{prop1}
(Ergodicity). \textit{ The spatial process associated with the GBP covariance structure is ergodic. That is, assuming $\gamma_k \geq 0$ for all $k$ and that at least one coefficient is strictly positive ($\gamma_k > 0$), the correlation function $\rho(d) = C(d) / \sigma^2$ satisfies: $$\lim_{d \to \infty} \rho(d; \boldsymbol{\gamma}) = 0$$}
\end{proposition}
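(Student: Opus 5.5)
The plan is to work directly with the explicit form of the GBP covariance on the tail region $d \geq \tau$. Since the limit $d\to\infty$ only involves this region, the behaviour of $\boldsymbol{\gamma}^{\top}\mathbf{G}_m(d)$ on $[0,\tau)$ is irrelevant. There
$$\rho(d;\boldsymbol{\gamma}) = \exp\left(-\sum_{k=1}^m \gamma_k - \frac{m\gamma_m}{\tau}(d-\tau)\right),$$
so everything reduces to the sign of the slope $m\gamma_m/\tau$ of the linear exponent.

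\textbf{Step 1 (where the slope comes from).} I would first record why the extension beyond $\tau$ has slope $m\gamma_m/\tau$. The derivative of $G_k(d)=F_B(d/\tau\mid k,m-k+1)$ at $d=\tau^-$ equals $f_B(1\mid k,m-k+1)/\tau$. This is zero for $k<m$, because of the factor $(1-x)^{m-k}$. For $k=m$ it equals $m/\tau$, since the $Beta(m,1)$ density is $m x^{m-1}$. Hence the linear tail is exactly the $C^1$ continuation of $\boldsymbol{\gamma}^{\top}\mathbf{G}_m$. At $d=\tau$ the two branches agree, because $\mathbf{G}_m(\tau)=\mathbf{1}$.

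\textbf{Step 2 (the limit).} If $\gamma_m>0$, the exponent tends to $-\infty$ linearly in $d$, so $\rho(d)\to 0$, and in fact $\rho(d)=O(e^{-m\gamma_m d/\tau})$.

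\textbf{Step 3 (the main obstacle).} The main obstacle is that the stated hypothesis, nonnegative $\gamma_k$ with at least one strictly positive, does \emph{not} by itself force $\gamma_m>0$. If $\gamma_m=0$ but some $\gamma_k>0$ with $k<m$, the formula gives $\rho(d)=\exp(-\sum_k\gamma_k)$ for all $d\ge\tau$. This is a strictly positive constant, so the limit is not zero.

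I would therefore present the proof under the corrected hypothesis $\gamma_m>0$. This is the natural condition, being equivalent to a nonzero terminal slope of the cumulative hazard-type function $\boldsymbol{\gamma}^{\top}\mathbf{G}_m$. I would note explicitly that the weaker assumption in the statement must be strengthened, or else the last coefficient must be constrained to be positive in the prior. I would also point out that the closing claim of the preceding subsection, that $C(d)\to 0$ as $d \to \infty$, implicitly uses the same condition.
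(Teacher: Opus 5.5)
Your argument is essentially the paper's: restrict to the branch $d\ge\tau$, observe that the exponent is linear in $d$ with slope $m\gamma_m/\tau$, and let it diverge. Your Step 1 ($C^1$ matching at $\tau$) is a correct extra that the paper omits and the limit does not need. Your Step 3 objection is also correct and worth keeping. The paper's proof itself silently switches to the assumption $\gamma_m>0$ (``to ensure spatial decay''). Your example with $\gamma_m=0$ and some $\gamma_k>0$ gives $\rho(d)=\exp(-\sum_k\gamma_k)>0$ for all $d\ge\tau$. This shows the proposition as stated is false without your strengthened hypothesis.
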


\begin{proof}
Given that the threshold $\tau$ is finite ($\tau < \infty$), for any sufficiently large distance $d$ such that $d \ge \tau$, the correlation function $\rho(d; \gamma)$ is governed by the second branch of the covariance model:
$$ \rho(d; \gamma) = \exp \left( - \left( \sum_{k=1}^{m} \gamma_k + \frac{m \gamma_m}{\tau}(d - \tau) \right) \right). $$
To demonstrate that the spatial process is ergodic, we evaluate the asymptotic behavior of the correlation function as $d \to \infty$. Let $g(d)$ denote the argument of the exponential function:
$$ g(d) = \sum_{k=1}^{m} \gamma_k + \frac{m \gamma_m}{\tau}(d - \tau). $$
By hypothesis, $\gamma_k \ge 0$ for all $k$, and $\tau > 0$. Assuming the terminal coefficient $\gamma_m$ is strictly positive ($\gamma_m > 0$) to ensure spatial decay, the linear term in $d$ diverges to positive infinity. Consequently, we have:
$$ \lim_{d \to \infty} g(d) = \infty. $$
Given the continuity and asymptotic properties of the exponential function, it follows that:
$$ \lim_{d \to \infty} \rho(d; \gamma) = \lim_{d \to \infty} \exp(-g(d)) = 0. $$
Therefore, the correlation strictly dissipates as the distance approaches infinity. This asymptotic decay satisfies the sufficient condition for the ergodicity of the associated spatial process.
\end{proof}


\begin{proposition}(Equivalence to the Exponential Model). \textit{The classical parametric exponential covariance model is a special case of the proposed GBP structure when the polynomial has degree $m = 1$.}
\end{proposition}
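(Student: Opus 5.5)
The plan is to substitute $m=1$ into the definition of the GBP covariance function $C(d;\boldsymbol{\gamma},\sigma^2)$ and check that both branches collapse to $\sigma^2\exp(-\phi d)$ with a single decay parameter $\phi$.

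For $m=1$ the coefficient vector is the scalar $\gamma_1\ge 0$. The basis consists of the single function $G_1(d)=F_B(d/\tau\mid 1,1)$, the CDF of a $Beta(1,1)$, i.e.\ uniform, distribution. On $0\le d<\tau$ this gives $G_1(d)=d/\tau$, so the first branch becomes
$$C(d;\gamma_1,\sigma^2)=\sigma^2\exp\!\left(-\frac{\gamma_1}{\tau}\,d\right).$$

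For $d\ge\tau$ the second branch reads
$$\sigma^2\exp\!\left(-\left(\gamma_1+\frac{\gamma_1}{\tau}(d-\tau)\right)\right)=\sigma^2\exp\!\left(-\frac{\gamma_1}{\tau}\,d\right),$$
because $\gamma_1-\gamma_1+\gamma_1 d/\tau=\gamma_1 d/\tau$. The two branches therefore coincide, and the piecewise definition reduces to a single closed form valid for all $d\ge 0$.

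Setting $\phi=\gamma_1/\tau$ recovers the classical exponential model $C(d)=\sigma^2\exp(-\phi d)$. The map $\gamma_1\mapsto\phi=\gamma_1/\tau$ is a bijection between $\gamma_1>0$ and $\phi>0$, since $\tau$ is a fixed positive constant. Hence every exponential covariance with decay $\phi$ is obtained from the GBP with $\gamma_1=\phi\tau$, and conversely. The boundary case $\gamma_1=0$ gives the degenerate constant covariance $\sigma^2$; it is excluded, matching the positivity assumption of Proposition~\ref{prop1}. This settles the equivalence.

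There is no real obstacle here. The only point needing care is the second branch: one must check that the linear extension with slope $m\gamma_m/\tau$ matches exactly the continuation of the first branch when $m=1$, rather than being merely continuous at $\tau$. This is what makes the identity global, and the computation above shows that it holds.
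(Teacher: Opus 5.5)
Your proof is correct and follows the paper's argument: set $m=1$, note that the $Beta(1,1)$ CDF is the identity so the first branch becomes $\sigma^2\exp(-(\gamma_1/\tau)d)$, and take $\phi=\gamma_1/\tau$. You also check that the $d\ge\tau$ branch collapses to the same expression and give the bijection $\gamma_1=\phi\tau$; the paper's proof omits both, and the branch check is what makes the identification hold for all $d\ge 0$ rather than only on $[0,\tau)$.
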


\begin{proof} Setting the polynomial degree to $m = 1$, the coefficient vector reduces to a single scalar $\boldsymbol{\gamma} = \gamma_1$. The basis function $\mathbf{G}_{1}(d)$ becomes the CDF of a $Beta(1, 1)$ distribution. Since the $Beta(1,1)$ distribution is a standard Uniform distribution $U(0,1)$, its CDF is the identity function itself; that is, $F_{B}(d/\tau \mid 1, 1) = d/\tau$. 
Under these boundary conditions, the first branch of the covariance function reduces to:$$C(d; \gamma_1) = \sigma^2 \exp\left( - \gamma_1 \frac{d}{\tau} \right) = \sigma^2 \exp\left( - \phi d \right),$$where $\phi = \gamma_1/\tau$ acts analogously to the range parameter. Thus, the classical parametric exponential model emerges as a special case of the proposed formulation.
\end{proof}

By increasing the polynomial degree ($m > 1$), the GBP model gains the mathematical flexibility to dynamically learn complex patterns, asymmetries, and nonlinear correlation decay directly from the data, overcoming the rigidity limitations of classical functions while preserving the rigor and essential asymptotic properties of geostatistical theory.

\section{Simulation studies}
\label{sec:sim}
In this section, we evaluate the goodness-of-fit of the proposed semiparametric model for cases where data were generated from an isotropic covariance function. The primary objective is to analyze the ability of the GBP model to approximate the theoretical correlation curve and recover the true process parameters, comparing its performance with classical approaches.


To test the flexibility of the GBP model regarding different degrees of smoothness and spatial decay behaviors, data were generated from four classical covariance structures: Exponential, Matérn $3/2$, Power Exponential with shape parameter $p=0.5$ (generating a rougher surface), and Power Exponential with shape parameter $p=1.8$ (generating a smoother surface). For each scenario, we fitted three distinct covariance functions: our proposed non-parametric model (GBP), the exponential covariance, and the Matérn $3/2$ covariance. 

The data were generated using a spatial linear regression model, which is a special case of the hierarchical structure in Section~\ref{sec:gshm}, to preserve the direct interpretability of the parameters. The response variable vector $\mathbf{Y} \in \mathbb{R}^n$ is defined by:
$$
\mathbf{Y} = \beta_{0} + \mathbf{X}\boldsymbol{\beta} + \boldsymbol{\theta},
$$
where $\beta_{0}$ represents the global intercept, $\mathbf{X}$ is the covariate matrix associated with the coefficient vector $\boldsymbol{\beta}$, and $\boldsymbol{\theta} = (\theta_1, \cdots, \theta_n)^{\top}$ denotes the spatial dependence random effect, modeled via a Gaussian Process with zero mean and covariance matrix $\boldsymbol{\Sigma}$, such that $\boldsymbol{\theta} \sim \mathcal{GP}(\mathbf{0}, \boldsymbol{\Sigma})$. The true values set for the simulation were: $\sigma^2 = 1$ (partial sill), $\beta_0 = 1$, and $\beta_1 = -0.75$.


The analyses were performed in the \texttt{R} statistical environment \citep{R} using the \texttt{rstan} interface. One hundred Monte Carlo (MC) replicates were conducted for three distinct sample sizes ($n = 100, 200, 500$). The sampling process followed a nested design: initially, a sample of 500 observations randomly distributed within a unit square $[0,1] \times [0,1]$ was generated. From this set, a subsample of 200 observations was drawn, and from that, a further subsample with $n=100$ was obtained. Following literature recommendations for non-parametric polynomial models \citep{osman2012nonparametric}, the degree of the Bernstein polynomial ($m$) was defined proportionally to the sample size, adopting the empirical rule $m \approx \sqrt{n}$. Therefore, $m=10$, $m=14$, and $m=22$ were set for sample sizes $n=100$, $n=200$, and $n=500$, respectively. The spatial scale parameter $\phi$ used in data generation was specified individually for each covariance function. The criterion adopted ensured that the theoretical spatial correlation decayed to exactly $0.05$ at a distance corresponding to $40\%$ of the maximum diagonal of the unit square (fixing the empirical practical range at $0.57$ units). The $\phi$ values are presented in Table~\ref{tab:valores_phi}.
\begin{table}[H]
\centering
\caption{Values of $\phi$ for covariance function.}
\label{tab:valores_phi}
\begin{tabular}{l*{2}{c}r}
\hline
Covariance function & $\phi$ \\
\hline
Exponential & $0.189 $ \\
Matérn 3/2 & $0.207$ \\
Power exp ($p=0.5$) &  $0.063 $ \\
Power exp ($p=1.8$) &  $0.308$ \\
\hline
\end{tabular}
\end{table}


Parameter estimation was conducted within the Bayesian paradigm using the MCMC algorithm. Specifically, the Hamiltonian Monte Carlo (HMC) algorithm \citep{neal2011hmc} was employed in combination with the No-U-Turn Sampler (NUTS) \citep{hoffman2014nuts}. To ensure numerical stability and algorithm convergence, regression coefficients were standardized, and weakly informative prior distributions were assigned, following current modeling recommendations \citep{rstan2026}:
\begin{eqnarray*}
\beta_0^* &\sim& \mathcal{N}(0, 10),  \\
\beta^* &\sim& \mathcal{N}(0,3), \\
\sigma &\sim& \text{Half-Student-t}(\nu=3, 0, 1), \\
\phi &\sim& \text{Uniform}(0, b_{\phi}) \quad \text{or} \quad \boldsymbol{\gamma} \sim \text{Log-Normal}(\mathbf{0}, 4).
\end{eqnarray*}
Here, $\beta_0^*$ and $\beta^*$ denote the priors applied to the standardized coefficients. The upper bound of the uniform distribution for the decay parameter ($\phi$), $b_{\phi}$, was set to the value required to ensure that spatial correlation drops to $0.05$ at a maximum distance of $75\%$ of the unit square's diagonal. The $\text{Half-Student-t}$ distribution has $3$ degrees of freedom, is centered at $0$, and has a scale parameter of $1$. For the proposed GBP structure, the polynomial coefficients were assigned a Log-Normal prior, ensuring the non-negativity constraint ($\gamma_k \geq 0$).


\subsection{Exponential Covariance Scenario}
\label{sec:exp:sim}

In this scenario, data were generated using an exponential covariance matrix, based on the parametric specifications and the range parameter ($\phi$) presented in Table~\ref{tab:valores_phi}. For each Monte Carlo replicate, the Exponential, Matérn, and GBP models were fitted.
\begin{figure}[htb]
    \centering
    \begin{subfigure}{0.3\textwidth}
        \centering
        \includegraphics[width=\linewidth]{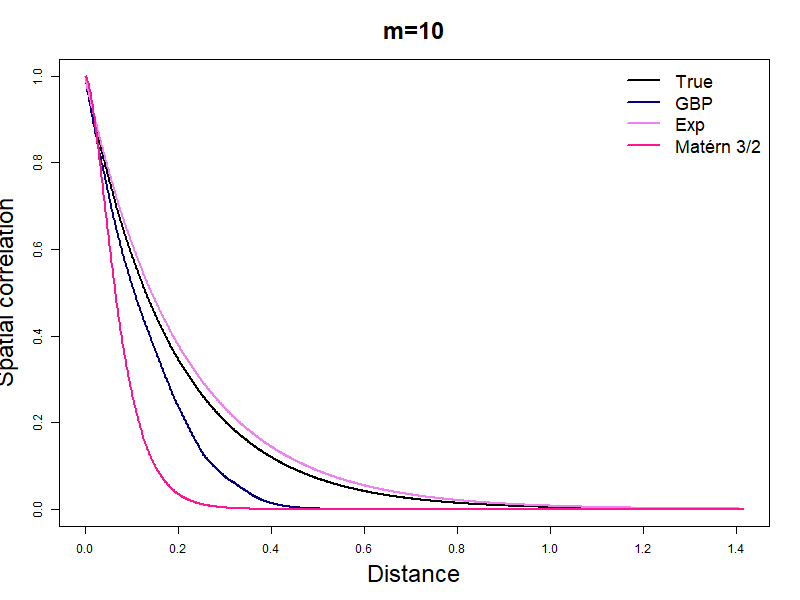}
        \caption{$n=100$}
    \end{subfigure}
    \hfill
    \begin{subfigure}{0.3\textwidth}
        \centering
        \includegraphics[width=\linewidth]{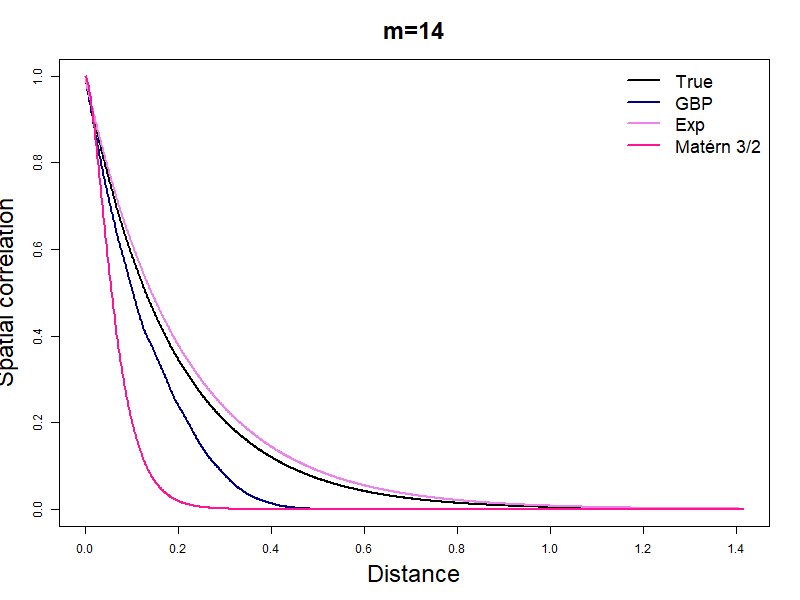}
        \caption{$n=200$}
    \end{subfigure}
    \hfill
    \begin{subfigure}{0.3\textwidth}
        \centering
        \includegraphics[width=\linewidth]{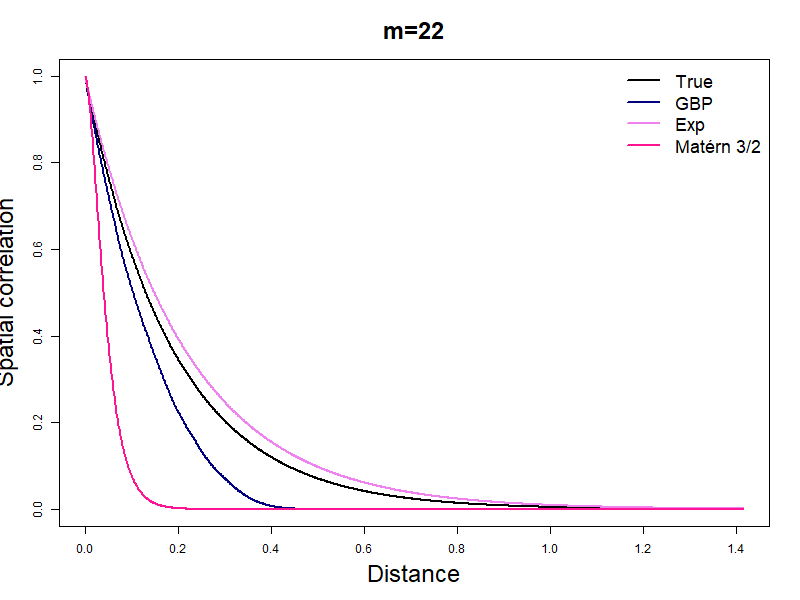}
        \caption{$n=500$}
    \end{subfigure}
    \caption{Spatial correlation curves: exponential covariance.}\label{fig:sim_exp}
\end{figure}

\begin{figure}[htb]
    \centering
    \begin{subfigure}{0.3\textwidth}
        \centering
        \includegraphics[width=\linewidth]{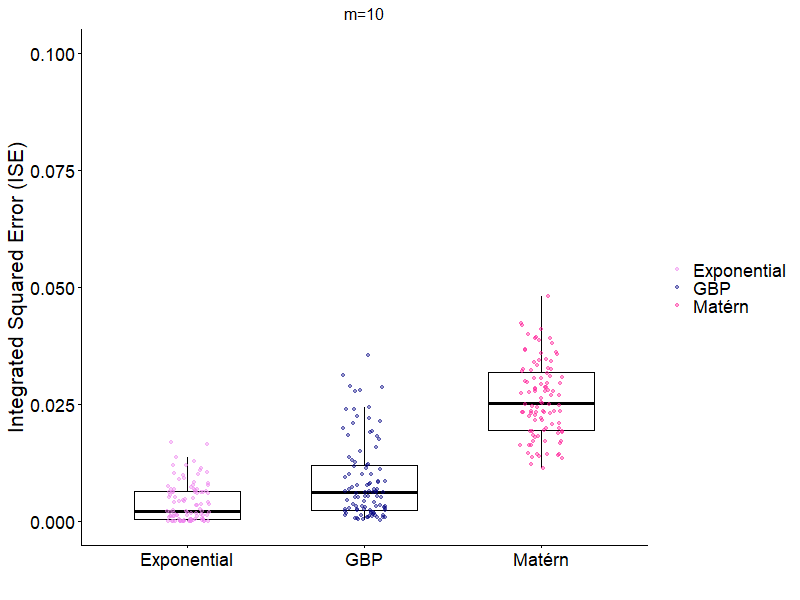}
        \caption{$n=100$}
    \end{subfigure}
    \hfill
    \begin{subfigure}{0.3\textwidth}
        \centering
        \includegraphics[width=\linewidth]{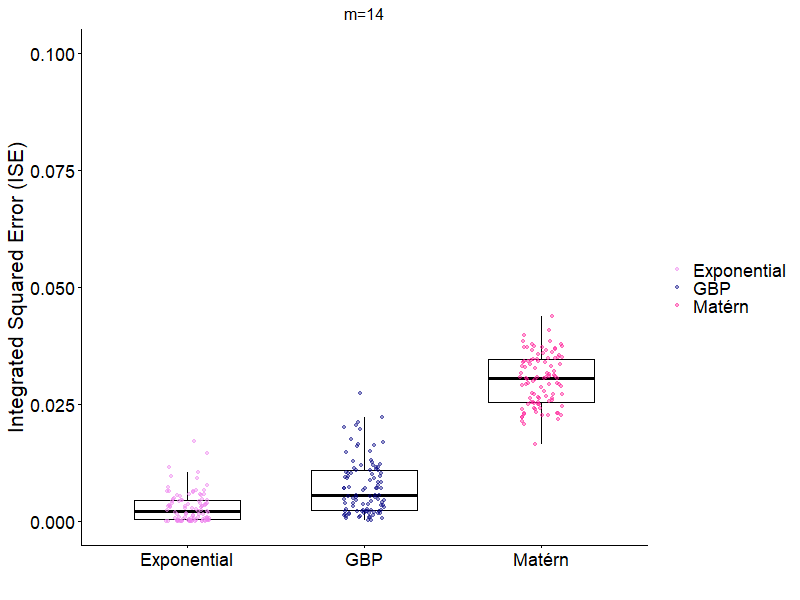}
        \caption{$n=200$}
    \end{subfigure}
    \hfill
    \begin{subfigure}{0.3\textwidth}
        \centering
        \includegraphics[width=\linewidth]{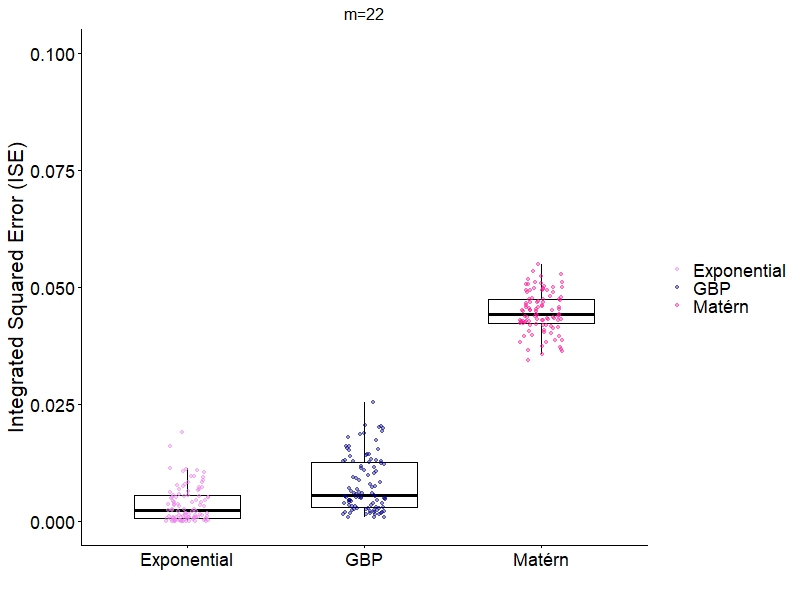}
        \caption{$n=500$}
    \end{subfigure}
    \caption{Boxplot of ISE: exponential covariance case.}\label{fig:dist_exp}
\end{figure}
Figure \ref{fig:sim_exp} shows the medians of the estimated spatial correlation curves, while Figure \ref{fig:dist_exp} presents the boxplot of the ISE. Across the three sample sizes analyzed, the curve estimated by the GBP model is the one that most closely approximates the true theoretical curve, surpassed only by the Exponential model, which represents the exact structure of the generating process. Additionally, the GBP model exhibited a less abrupt decay compared to that estimated by the Matérn function. This restrictive behavior of the Matérn model was theoretically expected. This class has greater flexibility regarding the process's differentiability and smoothness, therefore, it tends to underestimate the spatial range to accommodate the decay rate imposed by the exponential model, resulting in a relatively smaller practical range.

\begin{figure}[H]
    \centering
    \begin{subfigure}{\textwidth}
        \centering
        \includegraphics[height=5cm]{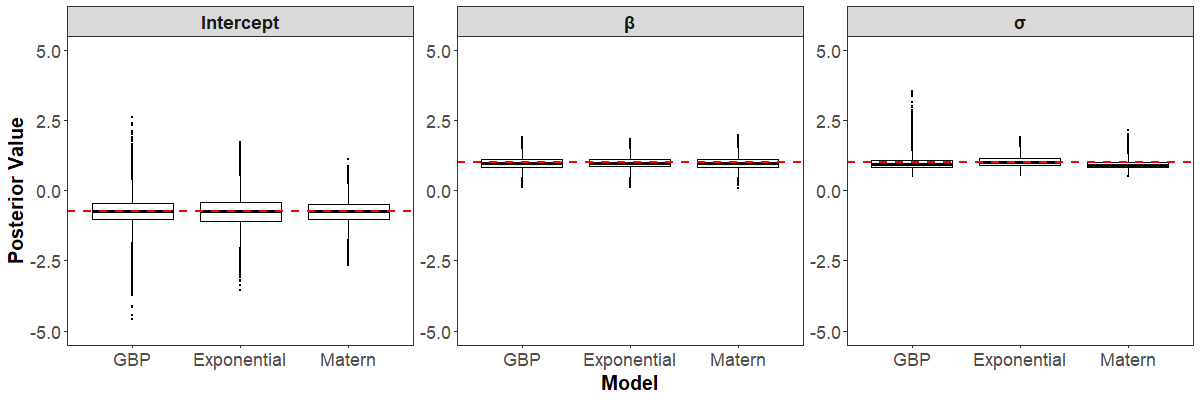}
        \caption{$n=100$.}
        \label{fig:param_exp100}
    \end{subfigure}
    \vspace{0.3cm} 
    
    \begin{subfigure}{\textwidth}
        \centering
        \includegraphics[height=5cm]{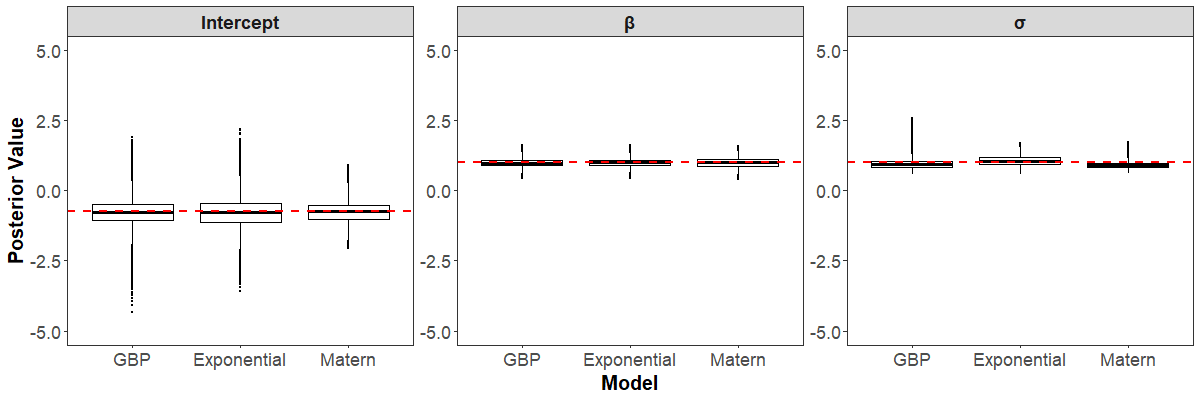}
        \caption{$n=200$.}
        \label{fig:param_exp200}
    \end{subfigure}
    \vspace{0.3cm}
    
    \begin{subfigure}{\textwidth}
        \centering
        \includegraphics[height=5cm]{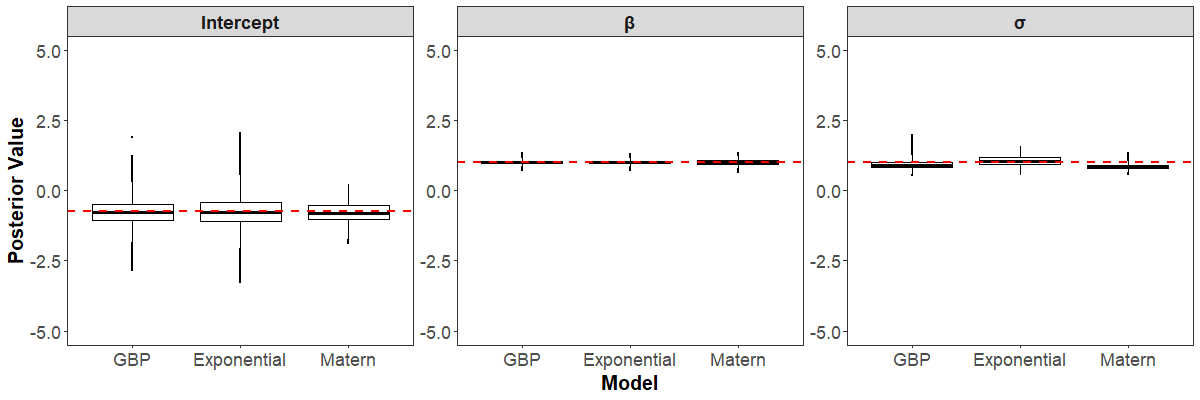}
        \caption{$n=500$.}
        \label{fig:param_exp500}
    \end{subfigure}
    
    \caption{Box plot of estimated parameters: exponential case.}
    \label{fig:param_exp_all}
\end{figure}

Figure~\ref{fig:param_exp_all} presents the accuracy and precision of the estimates obtained across the 100 replicates for sample sizes $n=100$, $n=200$, and $n=500$. An analysis of the boxplots in Figure \ref{fig:param_exp100} indicates that the three models yielded similar results in estimating the regression coefficients $\beta_0$ (intercept) and $\beta_1$ ($\beta$), with the posterior medians converging toward the true value. Regarding the parameter $\sigma$, the models approached the true value, showing little variation across iterations. Figure \ref{fig:param_exp200} shows that, even with the change in sample size, the models produced results similar to those of the previous sample for both the regression coefficients and the parameter $\sigma$. The relative quality of the recovery remained consistent: the exponential model yielded the posterior median closest to the true value, followed by the GBP and the Matérn function.

In general, these results remained consistent as the sample size increased (Figure \ref{fig:param_exp500}). In this case, it can be concluded that the posterior median estimates for the parameters are close to or equivalent to the model's true values, despite some dispersion-related distortions. Furthermore, the quality of the recovery based on posterior median fits reflects the performance of the estimated spatial correlation curves, with the GBP emerging as the best alternative to the model fitted using the data's true covariance.

\subsection{Scenario with Matérn 3/2 Covariance}

In the second scenario, the previous methodological specifications were maintained, but the data were generated using a Matérn 3/2 covariance matrix.  Figure ~\ref{fig:sim_matern} shows the medians of the estimated spatial correlation curves, and Figure \ref{fig:dist_matern} presents the boxplot of the ISE.

\begin{figure}[H]
    \centering
    \begin{subfigure}{0.3\textwidth}
        \centering
        \includegraphics[width=\linewidth]{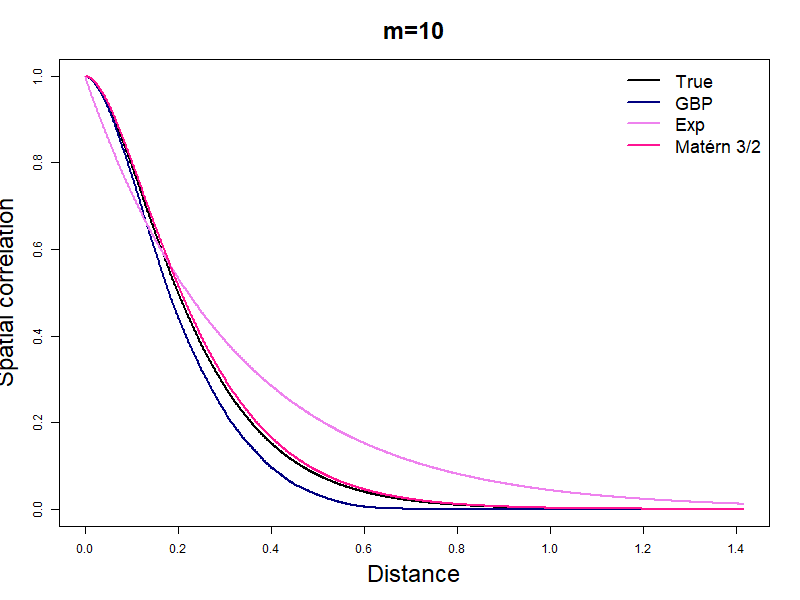}
        \caption{$n=100$}
    \end{subfigure}
    \hfill
    \begin{subfigure}{0.3\textwidth}
        \centering
        \includegraphics[width=\linewidth]{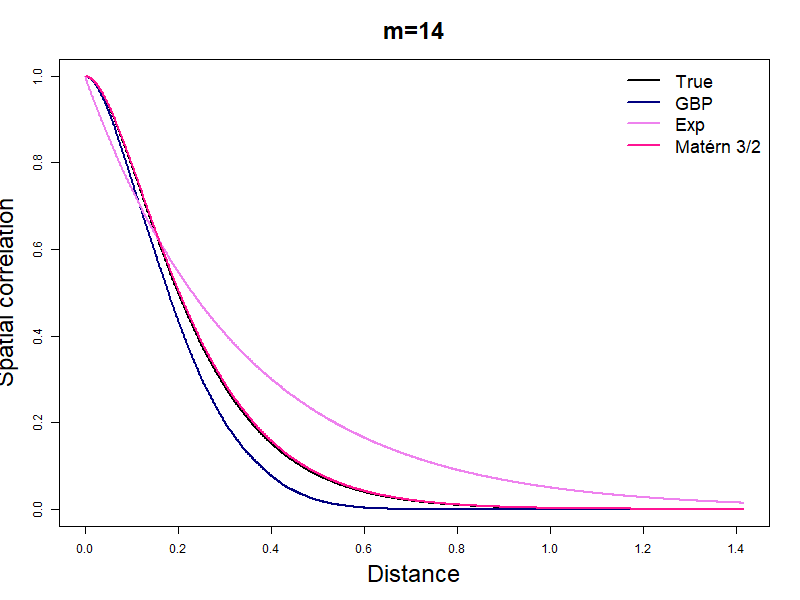}
        \caption{$n=200$}
    \end{subfigure}
    \hfill
    \begin{subfigure}{0.3\textwidth}
        \centering
        \includegraphics[width=\linewidth]{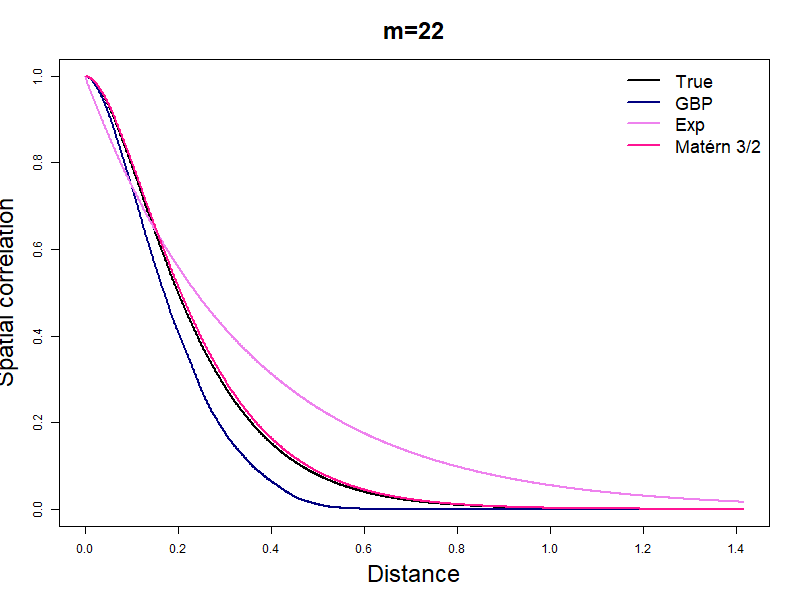}
        \caption{$n=500$}
    \end{subfigure}
    \caption{Spatial correlation curves: Matérn $3/2$ covariance.}\label{fig:sim_matern}
\end{figure}

\begin{figure}[H]
    \centering
    \begin{subfigure}{0.3\textwidth}
        \centering
        \includegraphics[width=\linewidth]{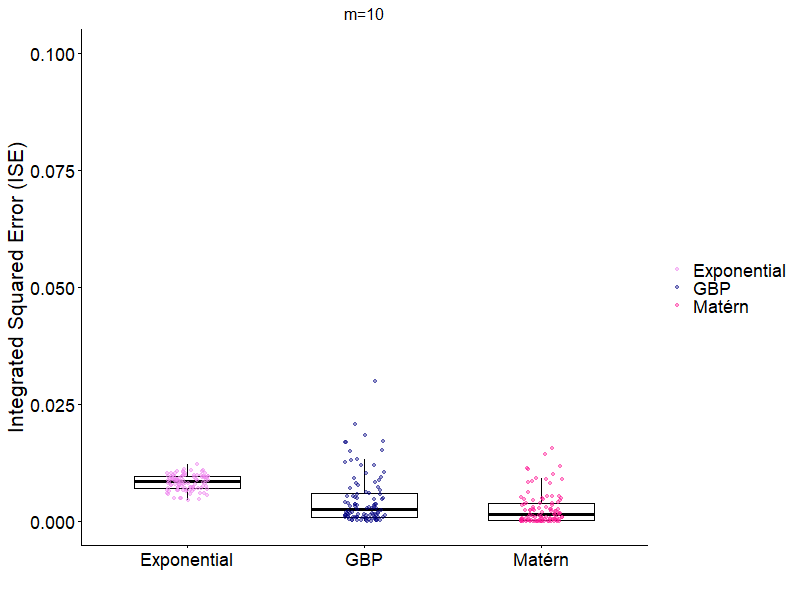}
        \caption{$n=100$}
    \end{subfigure}
    \hfill
    \begin{subfigure}{0.3\textwidth}
        \centering
        \includegraphics[width=\linewidth]{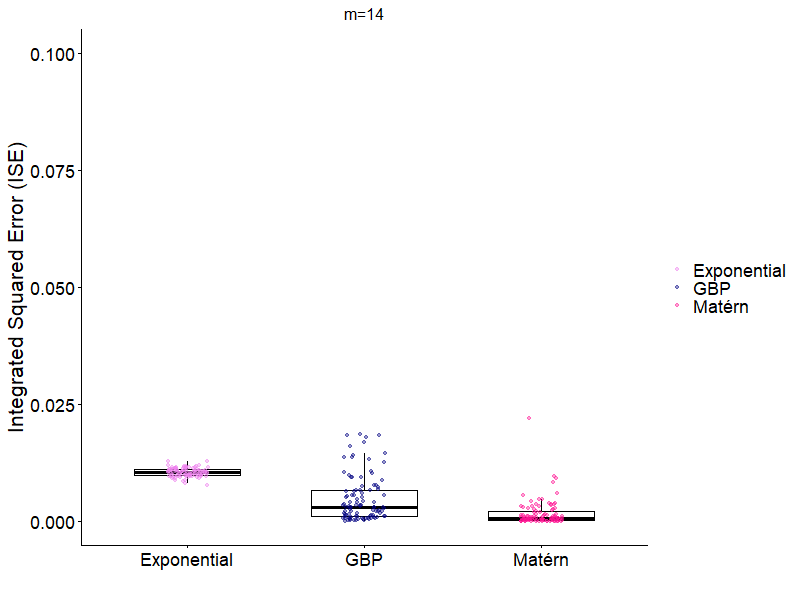}
        \caption{$n=200$}
    \end{subfigure}
    \hfill
    \begin{subfigure}{0.3\textwidth}
        \centering
        \includegraphics[width=\linewidth]{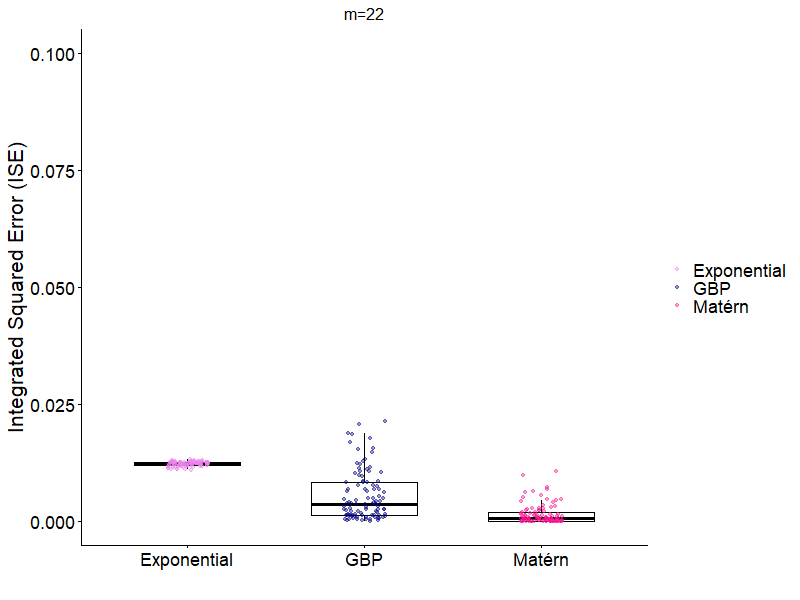}
        \caption{$n=500$}
    \end{subfigure}
    \caption{Boxplot of ISE: Matérn $3/2$ covariance case.}\label{fig:dist_matern}
\end{figure}

For all three sample sizes, the curve estimated by the GBP model was the one that most closely approximated the true theoretical curve, outperformed only by the Matérn function itself. In contrast, the exponential curve showed a less steep decline, deviating from the true pattern. This divergent behavior of the exponential function was theoretically expected due to its structural rigidity regarding differentiability at the origin ($d \to 0$). Since the true covariance structure (Matérn 3/2) describes a process with a higher degree of smoothness, the exponential model fails to adapt adequately and tends to inflate the spatial scale parameter. Consequently, the exponential curve decays more slowly with distance, overestimating the process's practical range. The GBP model, in turn, demonstrated sufficient flexibility to capture the observations generated by the Matérn model without distorting the tail behavior of the covariance function. 

\begin{figure}[htb]
    \centering
    
    \begin{subfigure}{\textwidth}
        \centering
        \includegraphics[height=5cm]{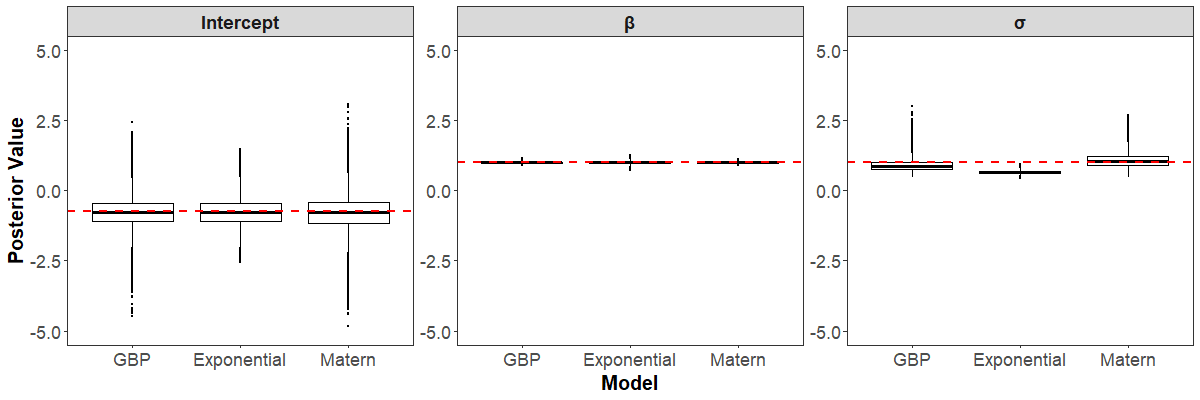}
        \caption{$n=200$.}
        \label{fig:param_matern200}
    \end{subfigure}
    
    
    \caption{Box plot of estimated parameters: Matérn case.}
    \label{fig:param_matern_all}
\end{figure}

Figure~\ref{fig:param_matern_all} summarizes the estimates obtained across the 100 simulations for the regression coefficients and the scale parameter. The results for $n=200$ indicate the convergence of the posterior medians to the true values for all models regarding the parameters $\beta_0$ and $\beta_1$. Regarding the parameter $\sigma$, the Matérn model recovered the exact value at the median. The GBP model showed a median close to the true value, whereas the exponential function—despite exhibiting lower dispersion—produced the fit that deviated most from the true parameter. For $n=100$ and $n=500$, the plots are supressed, but the conclusion are similar to the ones in Section~\ref{sec:exp:sim} and sample size $n=200$.
 The results of this scenario reinforce that GBP-based models are viable and highly flexible alternatives for fitting geostatistical models when the true covariance of the data is unknown.

\subsection{Scenario with power exponential covariance}

In the third scenario, data were generated using a power exponential covariance function, illustrating a situation where there is a misspecifcation of the parametric covariance structure in the fitted models. The goal is to evaluate the robustness of the proposed approach and the sensitivity of classical parametric functions. The power exponential family was selected to directly control the degree of surface smoothness via the shape parameter $p$. Two extremes of smoothness were investigated: ($a$) low smoothness ($p=0.5$), representing a highly irregular and non-differentiable process; and ($b$) high smoothness ($p=1.8$), generating a continuous and smooth spatial surface. These formulations allow for testing the adaptability of the GBP structure to different decay signatures. Figure~\ref{fig:sim_power05} displays the medians of the estimated spatial correlation curves for the high-irregularity scenario ($p=0.5$), and Figure \ref{fig:dist_power05} presents the boxplot of the ISE.

\begin{figure}[htb]
  \centering
    \begin{subfigure}{0.3\textwidth}
        \centering
        \includegraphics[width=\linewidth]{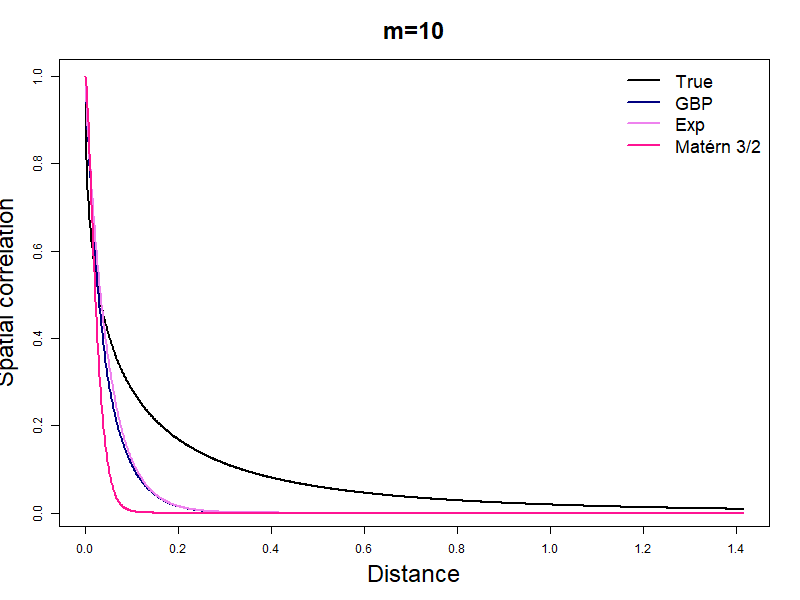}
        \caption{$n=100$}
    \end{subfigure}
    \hfill
    \begin{subfigure}{0.3\textwidth}
        \centering
        \includegraphics[width=\linewidth]{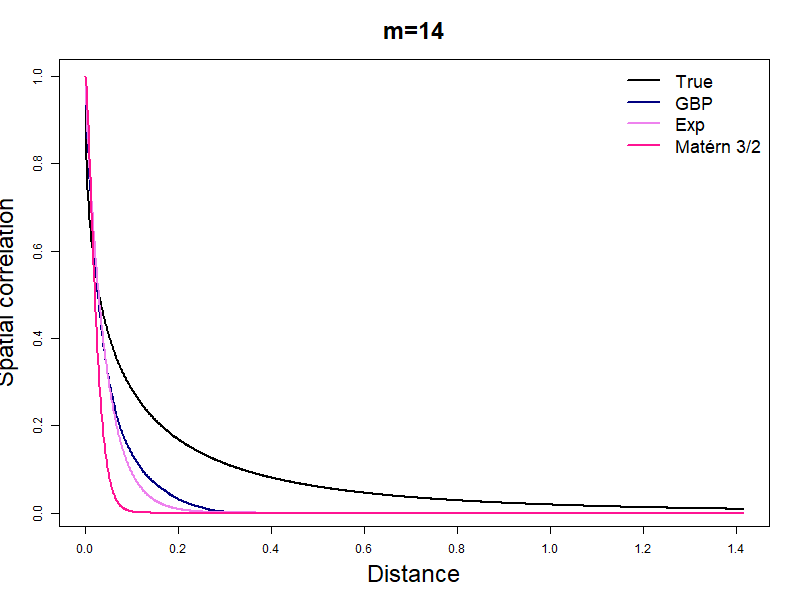}
        \caption{$n=200$}
    \end{subfigure}
    \hfill
    \begin{subfigure}{0.3\textwidth}
        \centering
        \includegraphics[width=\linewidth]{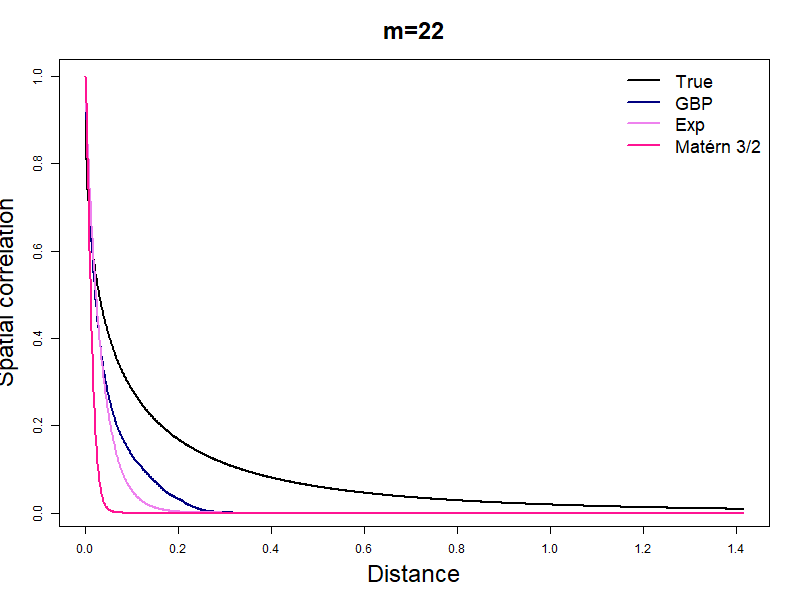}
        \caption{$n=500$}
    \end{subfigure}
    \caption{Spatial correlation curves:power exponential ($p=0.5$).}\label{fig:sim_power05}
\end{figure}

\begin{figure}[htb]
    \centering
    \begin{subfigure}{0.3\textwidth}
        \centering
        \includegraphics[width=\linewidth]{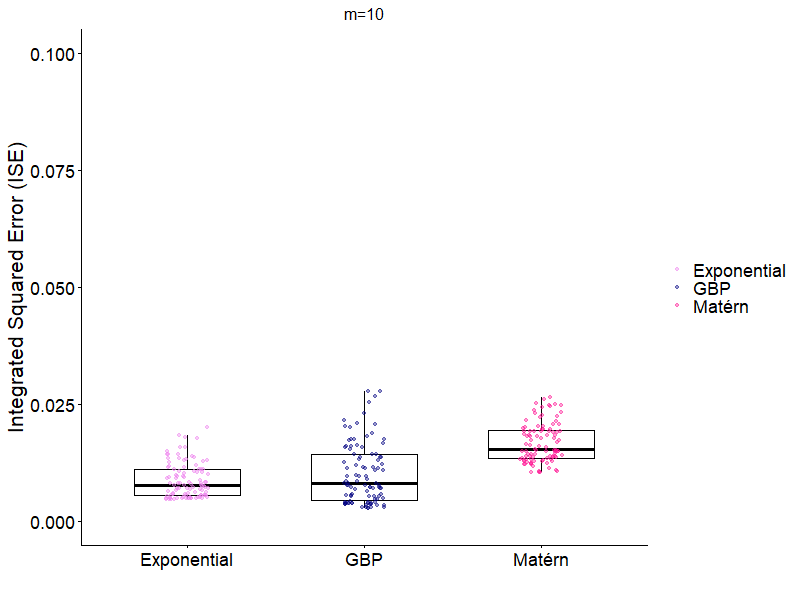}
        \caption{$n=100$}
    \end{subfigure}
    \hfill
    \begin{subfigure}{0.3\textwidth}
        \centering
        \includegraphics[width=\linewidth]{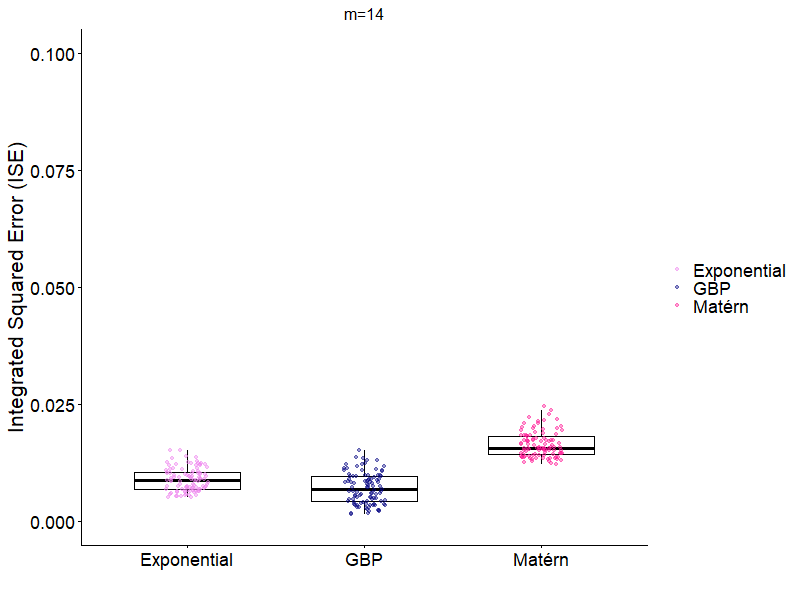}
        \caption{$n=200$}
    \end{subfigure}
    \hfill
    \begin{subfigure}{0.3\textwidth}
        \centering
        \includegraphics[width=\linewidth]{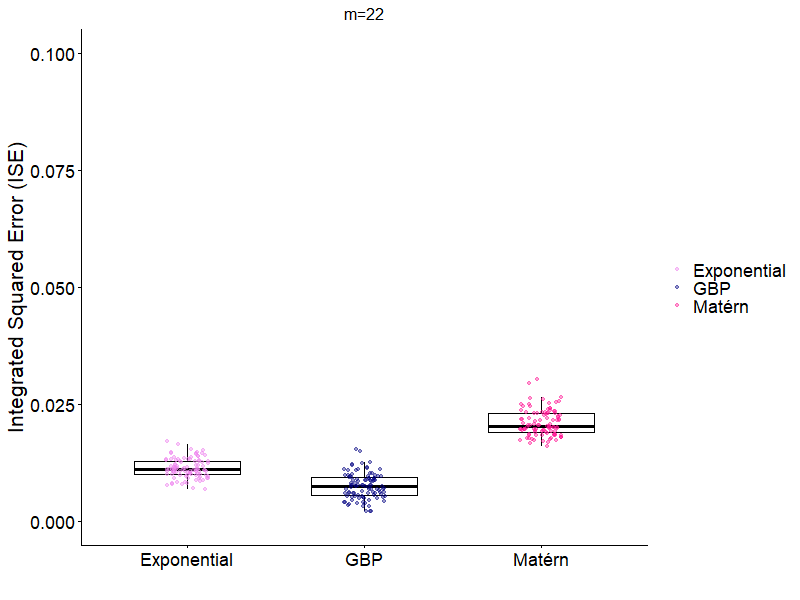}
        \caption{$n=500$}
    \end{subfigure}
    \caption{Boxplot of ISE: power exponential covariance case $(p=0.5)$.}\label{fig:dist_power05}
\end{figure}

Analysis of the estimated surface curves across the three sample sizes reveals that the GBP model was the most consistent among those evaluated. Although it does not perfectly capture the full smoothness structure at the origin for small samples, GBP outperformed the Exponential and Matérn models in all aspects regarding the approximation of the true curve. Figure~\ref{fig:param_power05_all} presents boxplots of the parameter estimates.

\begin{figure}[H]
    \centering
    \begin{subfigure}{\textwidth}
        \centering
        \includegraphics[height=5cm]{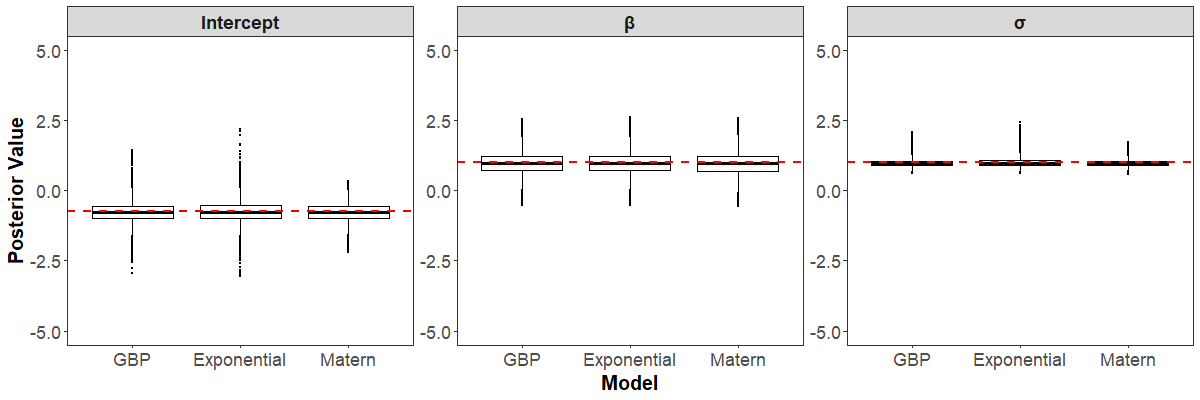}
        \caption{$n=100$.}
        \label{fig:param_power5100}
    \end{subfigure}
    \vspace{0.3cm} 
    
    \begin{subfigure}{\textwidth}
        \centering
        \includegraphics[height=5cm]{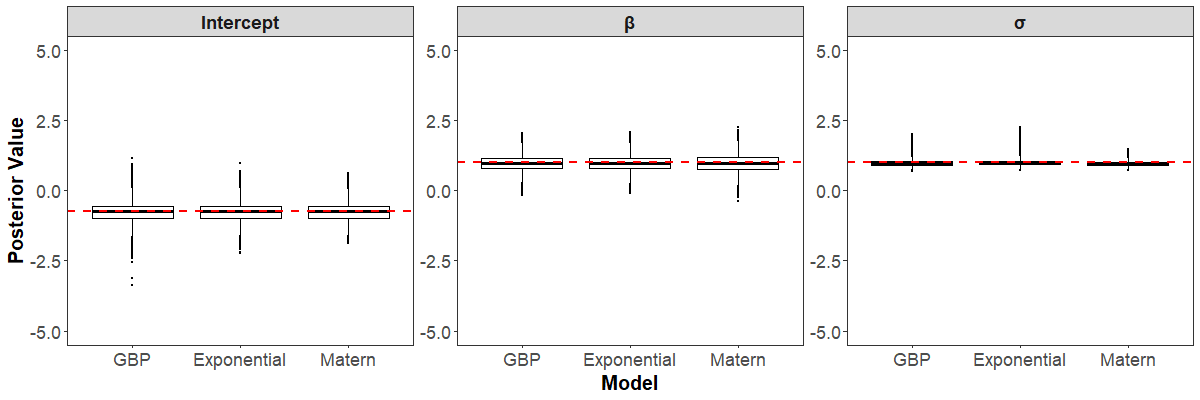}
        \caption{$n=200$.}
        \label{fig:param_power5200}
    \end{subfigure}
    \vspace{0.3cm}
    
    \begin{subfigure}{\textwidth}
        \centering
        \includegraphics[height=5cm]{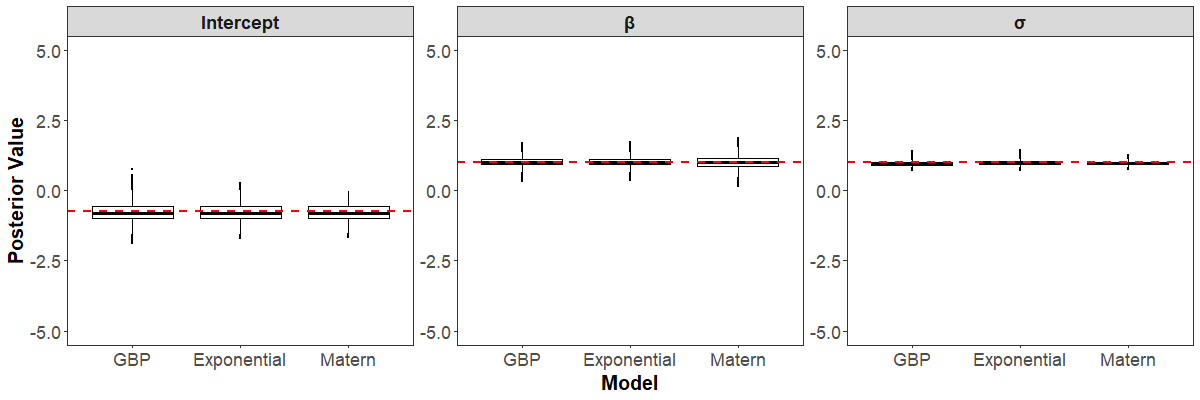}
        \caption{$n=500$.}
        \label{fig:param_power5500}
    \end{subfigure}
    
    \caption{Box plot of estimated parameters: power exponential ($p =1.8$).}
    \label{fig:param_power05_all}
\end{figure}

For $n=100$ and $n=200$, the models yielded similar results for the regression coefficients, with the posterior median converging to the true parameter.   For the scale parameter $\sigma$, all fitted models show little variability around the true value. However, based on the estimated correlation curves, this case supports the hypothesis that the GBP model represents a robust and reliable semiparametric alternative for spatial modeling under incorrect specifications.

We also consider data generated from the exponential power function ($b$) with a shape parameter of $p=1.8$. Figure~\ref{fig:sim_power18} illustrates the medians of the estimated spatial projection curves, and Figure \ref{fig:dist_power18} presents the boxplot of the ISE.

\begin{figure}[htb]
    \centering
    \begin{subfigure}{0.3\textwidth}
        \centering
        \includegraphics[width=\linewidth]{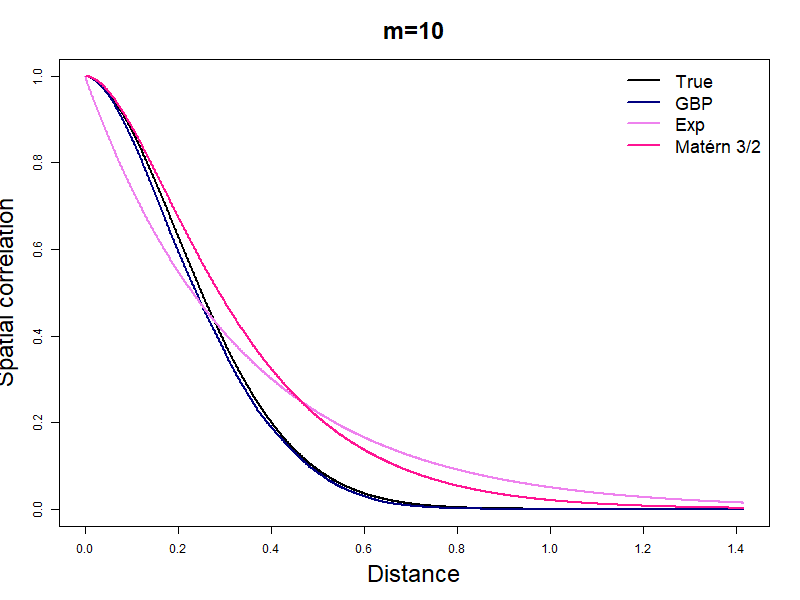}
        \caption{$n=100$}
    \end{subfigure}
    \hfill
    \begin{subfigure}{0.3\textwidth}
        \centering
        \includegraphics[width=\linewidth]{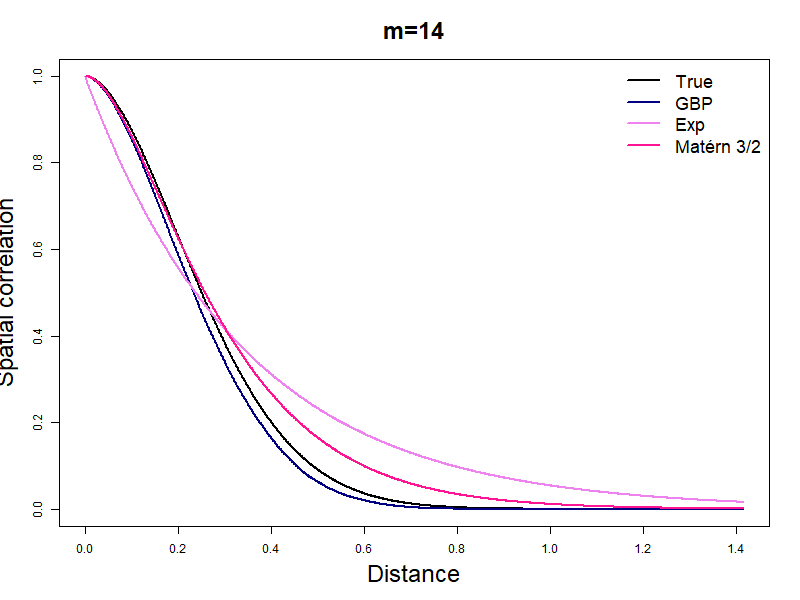}
        \caption{$n=200$}
    \end{subfigure}
    \hfill
    \begin{subfigure}{0.3\textwidth}
        \centering
        \includegraphics[width=\linewidth]{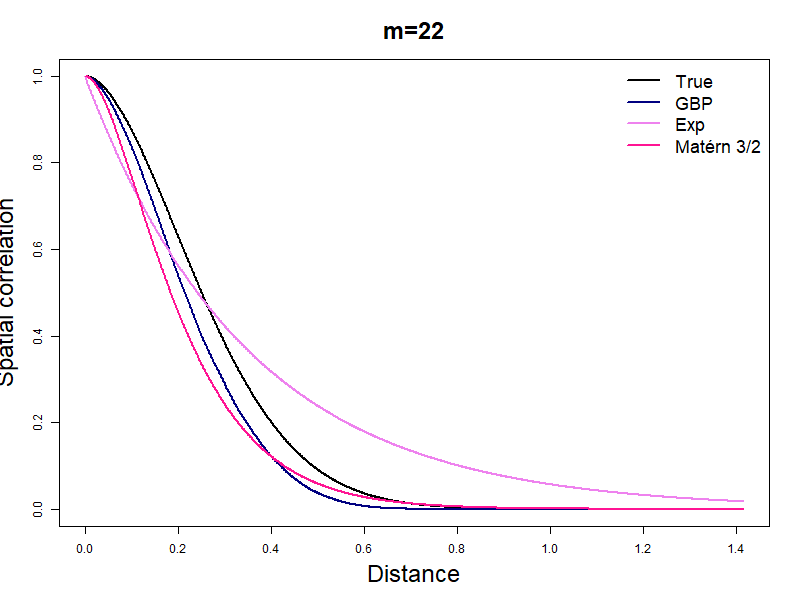}
        \caption{$n=500$}
  \end{subfigure}
    \caption{Spatial correlation curves: power exponential ($p=1.8$).}\label{fig:sim_power18}
\end{figure}
\begin{figure}[htb]
    \centering
    \begin{subfigure}{0.3\textwidth}
        \centering
        \includegraphics[width=\linewidth]{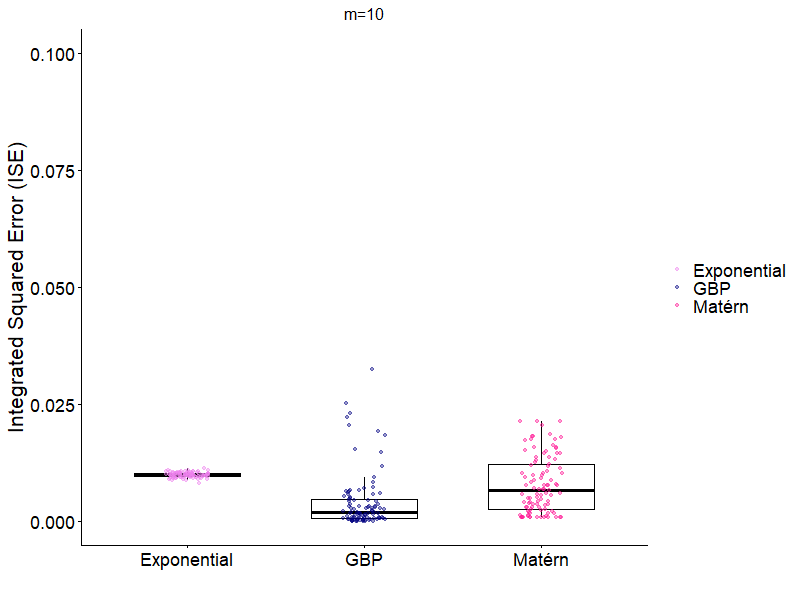}
        \caption{$n=100$}
    \end{subfigure}
    \hfill
    \begin{subfigure}{0.3\textwidth}
        \centering
        \includegraphics[width=\linewidth]{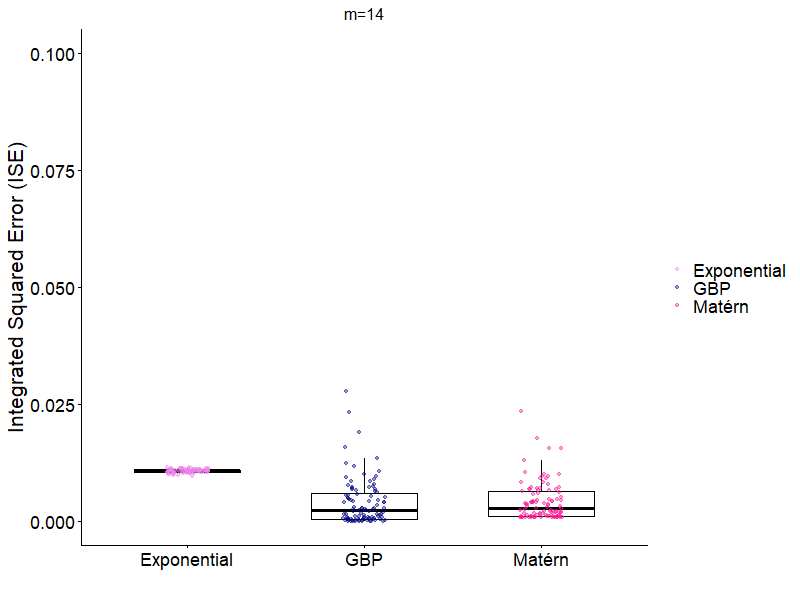}
        \caption{$n=200$}
    \end{subfigure}
    \hfill
    \begin{subfigure}{0.3\textwidth}
        \centering
        \includegraphics[width=\linewidth]{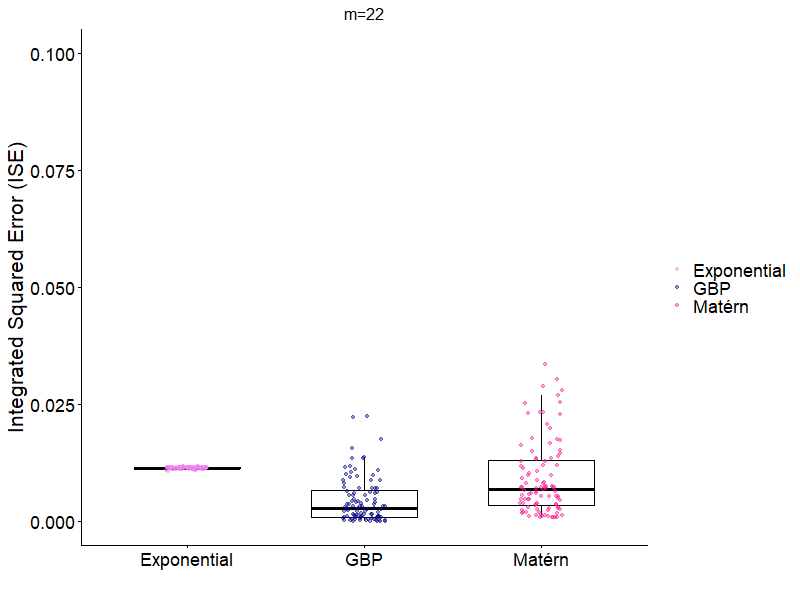}
        \caption{$n=500$}
    \end{subfigure}
    \caption{Boxplot of ISE: power exponential covariance case $(p=1.8)$.}\label{fig:dist_power18}
\end{figure}

Unlike the previous scenario ($p=0.5$), this theoretical observation curve does not exhibit an abrupt loss of continuity at the origin. Consequently, the fitted models aim to better understand and adapt to the behavior of the generating process. For $n=200$, the curve estimated by the GBP model was the one that most closely approximated the true curve across all observed distances. For $n=100$ and $n=500$, the model underperformed compared to the Matérn function within a small distance range. Despite minor fluctuations, the GBP model generally exhibits the covariance structure that best captures the spatial dependence of the data, followed by the Matérn and Exponential fits.

\begin{figure}[htb]
    \centering
    
    \begin{subfigure}{\textwidth}
        \centering
        \includegraphics[height=5cm]{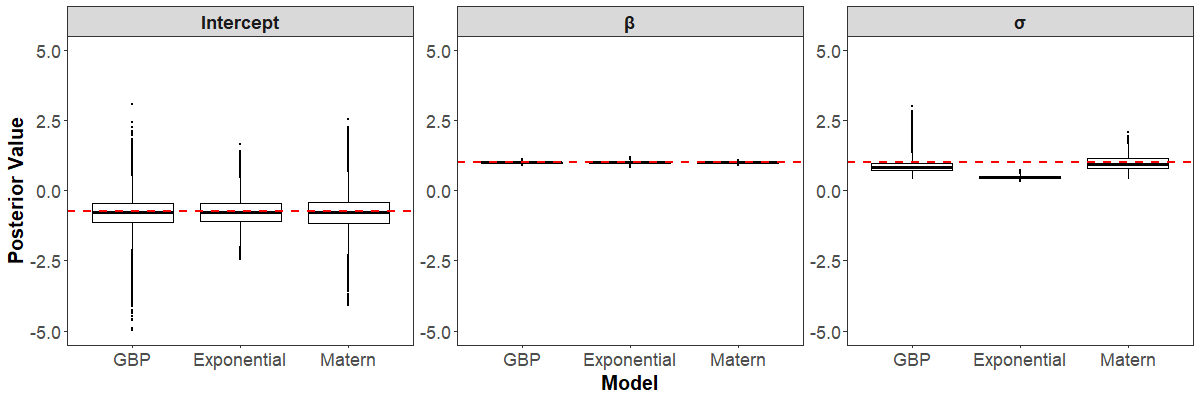}
        \caption{$n=200$.}
        \label{fig:param_power18200}
    \end{subfigure}
    
    
    \caption{Box plot of estimated parameters: power exponential ($p =1.8$).}
    \label{fig:param_power18_all}
\end{figure}

Figure~\ref{fig:param_power18_all} show the results for $n=200$ and indicates that, as in the previous scenarios, the fitted models satisfactorily recovered the regression coefficients $\beta_0$ and $\beta_1$, with posterior medians converging to the true values. The slight inconsistency observed in the spatial scale parameter $\sigma$ are also observed for $n=100$ and $n=500$ and may reflect asymptotic convergence difficulties regarding the true value. 

Certain methodological factors must be considered when jointly interpreting these findings. First, the GBP model may be sensitive to the choice of polynomial degree ($m$). However, even with a suboptimal degree choice, the results provide robust evidence that the Bernstein polynomial-based approach establishes itself as a reliable and flexible alternative for geostatistical modeling when the true covariance structure of the data is unknown.


\section{Application to real data}\label{sec:aplicacao}

The evaluation of the proposed model is based on data from the 2025 North American Nesting Bird Census (BBS). The BBS is a long-term annual survey, conducted since 1966, with the sole exception of 2020 due to COVID-19 pandemic restrictions, and contains point counts of over 700 bird species in North America. Data collection occurs primarily in June, during the peak breeding season, along thousands of randomly selected geographic routes in the United States and Canada. At each designated collection point on a route, a qualified bird identification scientist conducts a 3-minute point count, recording all species detected within a 400-meter radius. The complete dataset, along with comprehensive documentation on methodologies and variables, is publicly available at \url{https://www.sciencebase.gov/catalog/item/691cfb53d4be021d1d89b482}.
Latitude acts as a ``summary" of different macroecological characteristics, correlating directly with local temperature, reproductive events, and the behavioral transition of its population, such as variations from strict migrants at higher latitudes to permanent residents at lower latitudes \citep{jahn2020asymmetric}. These factors reinforce the need to model spatial covariance in a flexible and semiparametric way.

For model fitting and analysis of the American Robin distribution, a total sample of $n_{total} = 534$ routes observed across the 18 reported states will be considered. Of these, $n=500$ will be used for model fitting and $n_{pred} =34$ for prediction analysis. The sample data used meet the BBS quality criteria, i.e., they were found under ideal conditions of climate, data, time, and route completion; a route was randomly selected, and the observer followed the official sampling protocol.

\subsection{Modeling Approaches}

The application's interest lies in analyzing the abundance and distribution of the American Robin. The process of specifying, adjusting, and selecting the predictive models was conducted in two methodological stages. In the first stage, the focus was on modeling the average behavior of the process, as in the model presented in Section \ref{sec:gshm}, through a spatial latent variable with a distribution characterized by the proposed covariance function. Thus,
\begin{equation*}
\label{eq:spatial_data_model_gen}
Y(\mathbf{s}_i) \mid \theta(\mathbf{s}_i), \phi \sim f\left(Y(\mathbf{s}_i) \mid \mu(\mathbf{s}_i), \phi\right),
\end{equation*}
where $Y(\mathbf{s}_i)$ denotes the total number of birds observed on the route $\mathbf{s}_i$, and $\phi$ is the dispersion or shape parameter associated with the chosen distribution family $f(\cdot)$, $i=1, \ldots, n$. 

Under this fixed mean structure, the stochastic behavior of the data was investigated by testing different probabilistic hypotheses for the distribution of the response variable, since ecological count data frequently exhibit overdispersion behavior \citep{bolker2008ecological, ohara2010do}. Thus, the models were fitted under the Poisson, negative binomial, and Bell distributions. Therefore, for each location $i = 1, \dots, n$, the count variable $Y(\mathbf{s}_i)=Y_i$ is conditionally modeled by a discrete distribution:
\begin{align*}
Y_i \mid \mu_i, \psi &\sim f(\mu_i, \phi),\\
\mu_i &= \beta_0^\star +  X_i^\top \boldsymbol{\beta^*}, \\
\frac{1}{\sqrt{\psi}} &\sim \text{Gamma}(0.1, 0.1),
\end{align*}
where $\mu_i = E(Y_i)$ represents the expected average abundance, $\psi$ is the dispersion parameter in the negative binomial model or equal to 1 in Poisson and the Bell distributions. The $\beta_0^*$ and $\boldsymbol{\beta^*}$ parameters denote that the prior distributions have been placed in the standardized coefficients ($\beta_0$ and $\boldsymbol{\beta}$). Non-informative prior distributions were assigned to the hyperparameters, reflecting the absence of substantial prior knowledge; they were selected according to the traditional prior selection recommendations presented in \cite{rstan2026}.
The prior assigned are the same used in Section~\ref{sec:sim}.

In the parameterization used for the Negative Binomial, where $Var(Y_i) = \mu_i + \mu_i^2/\psi$, the dispersion parameter $\psi$ exhibits an asymptotic and highly non-linear relationship with the model variance. Values of $\psi$ close to zero generate explosions in variance, while large values approximate the distribution to the Poisson. Assigning a prior distribution directly to $\psi$ negatively impacts the sampling of the HMC algorithm, generating numerical instability (divergent transitions). To overcome this limitation, the prior is allocated to a reparameterized $1/\sqrt{\psi}$ for two reasons: the first is the linear interpretation, since in the formulation of the Negative Binomial as a Poisson-Gamma mixture, this reparameterization directly represents the standard deviation of the overdispersion multiplier effect, allowing for the choice of priors on a more intuitive scale; and second, by numerical stabilization, given that the transformation linearizes the critical region close to zero and compacts the long tail when $\psi \to \infty$, which optimizes the calculation of gradients by Stan and improves the convergence of MCMC \citep{rstan2026}.

The structured linear predictor, using the logarithmic link function and the settling component (\textit{offset}), is defined by:
\begin{equation*}
\log(\mu_i) = \beta_0 + \mathbf{X}_{i} \boldsymbol{\beta} + \theta_i + \text{offset}_i,
\end{equation*}
where $\beta_0$ represents the global intercept, $\mathbf{X} = ( X_1, X_2)$ is the covariate matrix, latitude and latitude squared \citep{zaida}, with coefficients $\boldsymbol{\beta}= (\beta_1, \beta_2)$ and  $\boldsymbol{\theta} = (\theta_1, \dots, \theta_n)^{\top}$ is the residual spatial dependence captured through a Gaussian process:
\begin{equation*}
\boldsymbol{\theta} \sim GP\left(\mathbf{0}, \boldsymbol{\Sigma}\right),
\end{equation*}
in which $\boldsymbol{\Sigma}$ is covariance a matrix.  

Initially, which distributions were combined with the latent stochastic spatial component structured by the GBP, also evaluating the impact of the polynomial degree on the smoothing of the spatial surface (comparisons between $m=8$ and $m=23$). These values were chosen according to what is suggested by \cite{osman2012nonparametric} with $m \approx \sqrt{n_{total}}$ or $m \approx n_{total}^{1/3}$ \citep{mclain2009estimation}.

After identifying the best predictive model within the proposed response distributions, the second stage was carried out, which consisted of comparative validation with the parametric model. The best-performing model was directly compared with its competing version using the classical exponential theoretical covariance function for the latent process. Decisions on choosing the best model were based on out-of-sample predictive capacity metrics, evaluating the balance between minimizing mean forecast errors (RMSE and MAE) and the properties of the credibility/predictive intervals, considering a trimmed mean of 2.5\%. Specifically, priority was given to the approach that demonstrated the best probability of nominal coverage (CP) (95\% target) associated with the highest predictive accuracy, measured by the smallest mean interval width (MPIW).

For the evaluation of predictive capacity, the complete dataset was divided into two disjoint subsets: the observed data vector, denoted by $\mathbf{Y}_{\text{obs}}$, with dimension $n = 500$, used for parameter fitting and estimation; and the validation data vector, denoted by $\mathbf{Y}_{\text{pred}}$, with dimension $n_{\text{pred}} = 34$, reserved exclusively for measuring the predictive power of the model in unsampled locations.

The parameters of all fitted models were estimated by \texttt{rstan}. Table ~\ref{tab:comparacao_modelos} shows the predictive performance results of the models adjusted in the first stage.

\begin{table}[H]
\centering
\caption{Comparison of the performance of adjusted predictive models (GBP).}
\label{tab:comparacao_modelos}
\begin{tabular}{llcccc}
\toprule
\textbf{Distribution} & \textbf{Spatial Covariance} & \textbf{RMSE} & \textbf{MAE} & \textbf{MPIW} & \textbf{CP} \\
\midrule
Bell           & GBP ($m=8$)   & 45.256 & 39.198 & 132.190 & 0.618 \\
Bell              & GBP ($m=23$)  & 44.252 & 38.140 & 127.602 & 0.618 \\
Negative binomial & GBP ($m=8$) & 32.566 & 21.835 & 112.195 & 0.882 \\
\rowcolor{gray!20}Negative binomial & GBP ($m=23$)  & \textbf{26.785} & \textbf{19.811} & \textbf{101.309} & \textbf{0.971} \\
Poisson           & GBP ($m=8$)   & 32.462 & 21.851 & 115.460 & 0.912\\
Poisson           & GBP ($m=23$)  & 32.412 & 21.722 & 113.870 & 0.912 \\

\bottomrule
\end{tabular}
\end{table}

The results in Table~\ref{tab:comparacao_modelos} show that the choice of the variable response distribution has a crucial impact on predictive capacity. Models fitted under the assumption of a negative binomial distribution outperformed approaches based on Poisson and Bell distributions in all of the evaluated metrics. This behavior is statistically justified by the presence of strong overdispersion in ecological species count data, a characteristic that is particularly modeled by the extra dispersion parameter of the negative binomial. Although the Bell distribution is an alternative for overdispersed data, it is not as flexible due to the deterministic relation of its variance, making it a more viable option for moderately overdispersed data.

Among the models evaluated, the one composed of the negative binomial distribution associated with the proposed spatial covariance function based on Bernstein Polynomials with $m=23$ (GBP, $m=23$) was the one that presented the best predictive approach. This model obtained the lowest validation errors, with a Mean Squared Error (RMSE) of $26.785$ and a Mean Absolute Error (MAE) of $19.811$. It achieved a Coverage Probability of 97.1\%, guaranteeing statistical reliability, in addition to the lowest Mean Predictive Interval Width (MPIW = 101.309) observed. Thus, GBP ($m=23$) was the model chosen for the next step.
\begin{table}[htb]
\centering
\caption{Final comparison of the performance of adjusted predictive models.}
\label{tab:comparacao_modelos2}
\begin{tabular}{llcccc}
\toprule
\textbf{Distribution} & \textbf{Spatial Covariance} & \textbf{RMSE} & \textbf{MAE} & \textbf{MPIW} & \textbf{CP} \\
\midrule
Negative binomial & Exponential & 26.649 & 19.491 & 102.749 & \textbf{0.971} \\
\rowcolor{gray!20}Negative binomial & GBP ($m=23$)  & 26.785 & 19.811 & 101.309 & \textbf{0.971} \\
Negative binomial &             Mat\'ern $3/2$ & \textbf{26.131} & \textbf{19.415} & \textbf{100.785} & \textbf{0.971} \\
\bottomrule
\end{tabular}
\end{table}

Table \ref{tab:comparacao_modelos2} presents a comparison of the performance of the GBP and parametric (exponential and Matérn $3/2$ covariances, two of the most common in the literature) predictive models. The results reveal a pattern of equivalence and complementarity between the latent structures. The Matérn covariance showed a marginal advantage in terms of average prediction errors, with an RMSE of 26.131 and MAE of 19.415, compared to the RMSE of 26.785 and MAE of 19.811 obtained with the GBP structure. All methodologies yielded an identical coverage probability (CP) of 97.1\%. The Matérn structure also achieved the narrowest Mean Prediction Interval Width (MPIW = 101.309 vs. 101.309 for the GBP model) while maintaining the same optimal coverage level. 

\begin{table}[htb]
\centering
\caption{Posterior descriptive statistics of the structural parameters.}
\label{tab:spatial_model_summary}
\small
\begin{tabular}{lrrrrrrrr}
\toprule
\textbf{Parameter} & \multicolumn{4}{r}{GBP-NB} & \multicolumn{4}{r}
{Mat\'ern 3/2-NB} \\
\cmidrule{2-5}  \cmidrule{6-9} 
& \textbf{Mean} & \textbf{SE Mean} & \textbf{SD} & \textbf{95\% CI} & \textbf{Mean} & \textbf{SE Mean} & \textbf{SD} & \textbf{95\% CI} \\
\midrule
$\beta_0$ & -25.656 & 1.614 & 15.501 & [-54.763, 4.401] & -20.703 & 0.817 & 15.082 & [-49.306, 8.562]\\
$\beta_1$ &   1.314 & 0.080 &  0.776 & [ -0.191, 2.758] &   1.059 & 0.043 &  0.762 & [ -0.430, 2.525]\\
$\beta_2$ &  -0.015 & 0.001 &  0.010 & [ -0.033, 0.004] &  -0.011 & 0.001 &  0.010 & [ -0.030, 0.007]\\
$\sigma^2$  &   0.401 & 0.012 &  0.161 & [  0.195, 0.860] &   0.695 & 0.014 &  0.180 & [  0.459, 1.143]\\
$\psi$ & 3.774 & 0.053 &  0.438 &   [3.089, 4.823] &   3.460 & 0.011 &  0.269 & [2.963, 4.026]\\
\bottomrule
\end{tabular}
\end{table}

Table \ref{tab:spatial_model_summary} presents the posterior estimates of the parameter descriptive statistics for the GBP and Matérn models. It is observed that the 95\% credible intervals for the parameters $\beta_1$ (95\% CI: $[-0.191; 2.758]$ for the GBP and 95\% CI: $[-0.430; 2.525]$ for the Matérn) and $\beta_2$ (95\% CI: $[-0.033; 0.004]$ for the GBP and 95\% CI: $[-0.030; 0.007]$ for the Matérn) include the value zero; that is, there is no statistical evidence that these covariates exert an effect on the response variable at the significance level considered. It is important to note that, although the coefficients $\beta_1$ and $\beta_2$ are statistically significant in the model proposed by \cite{zaida}, that fit assumes the data follow a Poisson distribution. This difference can be explained by the inclusion of the negative binomial distribution's dispersion parameter, which controls for overdispersion and alters the inference result. Failure to adequately account for overdispersion typically leads to an underestimation of standard errors and an inflation of statistical significance in tests where none exists.

The spatial variance parameter $\sigma^2$ showed a posterior mean of 0.401 (95\% CI: $[0.195; 0.860]$) for the GBP and 0.695 (95\% CI: $[0.459; 1.143]$) for the Matérn, indicating significant and well-estimated spatial variance in both models. Finally, the parameter $\psi$, associated with the negative binomial distribution structure, showed a mean of 3.774 with a strictly positive credible interval ($[3.089; 4.823]$) for the GBP fit, and a mean of 3.460 (95\% CI: $[2.963; 4.026]$) for the Matérn fit. These results confirm the presence of overdispersion in the data and highlight the suitability of the negative binomial model. Furthermore, they show that although the Matérn covariance yields slightly better prediction performance over the metrics, the gains are minor and there is an agreement between the estimated parameters between the two models. Therefore, we conclude that the GBP structure as a viable alternative. Thus, we further investigate its performance. The convergence analysis was successful and is presented in Appendix \ref{app:conv}.

\begin{figure}[H]
\begin{center}
\includegraphics[height=6cm]{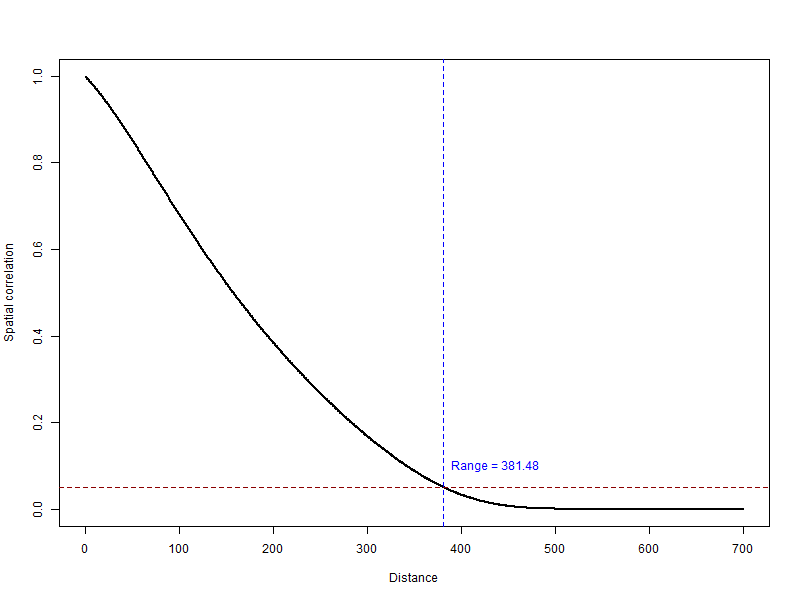}
\caption{Spatial correlation function as a function of geographic distance, highlighting the range (range = 381.48) estimated by the GBP-Negative binomial model.}\label{fig:range_nb}
\end{center}
\end{figure}

Figure \ref{fig:range_nb} shows the spatial correlation curve estimated by the model. As expected, the spatial correlation starts at $1.0$ for a zero distance (indicating that the variability at a given point is perfectly self-explained) and decreases continuously as the physical distance expands. The behavior of this curve allows us to estimate the range parameter, whose estimated value was 381.48 km, that is, the spatial dependence acts significantly up to a radius of approximately 381.48 km. For distances greater than this value, the spatial correlation converges asymptotically to zero (below the threshold of $0.05$, represented by the red horizontal dashed line). Next, we evaluated its ability to reproduce the observed patterns and its efficiency in extrapolating these estimates to unsampled regions.

Mapping the posterior mean of the transformed linear predictor, $\boldsymbol{\mu} = \exp(\boldsymbol{\eta})$ with $\boldsymbol{\eta} = \beta_0 + \mathbf{X}\beta + \boldsymbol{\theta}$ (Figure \ref{fig:theta_predic}), reveals a continuous and smoothed surface, highlighting the model's ability to capture the spatial dependence of the phenomenon. The central regions of the map, particularly Pennsylvania and New York, concentrate the highest expected values for the abundance. Conversely, the lowest estimates are located in the southern portions of the study area and in the far northeast (Maine).
\begin{figure}[H]
\begin{center}
\includegraphics[height=6cm]{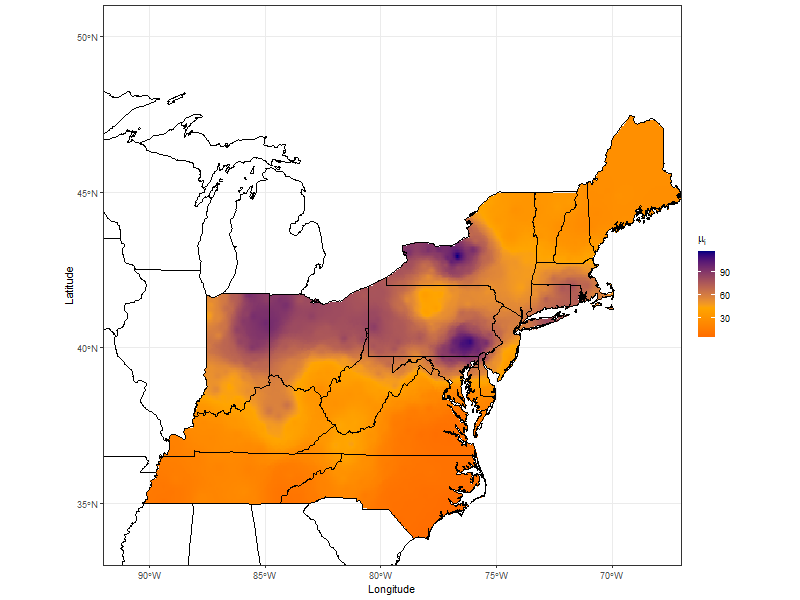}
\caption{Posterior mean of the expected abundance ($\boldsymbol{\mu}$) for the American Robin data. The continuous surface highlights regional patterns of overabundance (shades of blue) and underabundance (shades of orange) in the eastern United States.}\label{fig:theta_predic}
\end{center}
\end{figure}

Out-of-sample predictive performance was evaluated using a cross-validation design with $n_{\text{pred}} = 34$ independent test routes. The overall fit of the estimates is illustrated in Figure \ref{fig:validacao}, where we present the individual predictive performance of each test route in the validation set. Red triangles indicate the actual observed counts, empty blue circles represent point projections, and light blue vertical bars illustrate the 95\% predictive credibility intervals, highlighting the model's CP calibration.

\begin{figure}[H]
\begin{center}
\includegraphics[height=6cm]{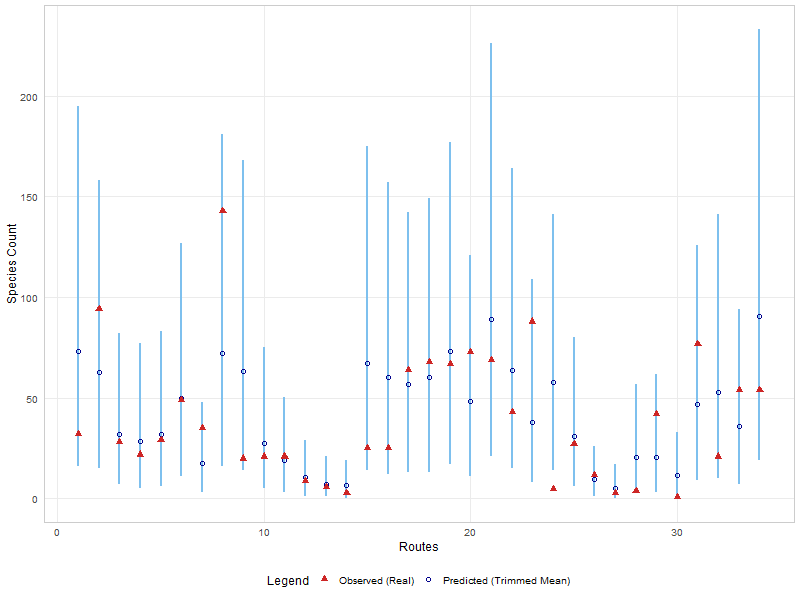}
	\caption{Individual predictive performance for each test route in the out-of-sample validation set.}\label{fig:validacao}
\end{center}
\end{figure}

It is observed that the CP of the 95\% credibility interval was able to encompass almost all of the real data (red triangles). Even in scenarios where the point estimate diverged from the observed value, the uncertainty associated with the predictive distribution was calibrated satisfactorily, without underestimating the projection errors.

We concluded that the model accurately mapped the spatial patterns of the American robin species in the eastern United States, capturing a spatial dependence that extends over a radius of 381.48 km. Furthermore, the results allow us to conclude that integrating the negative binomial counting structure with the Gaussian process using BP resulted in a statistically robust, well-calibrated model with good generalization power.


\section{Conclusion}\label{sec:conclusao}

In this study we introduced a robust semiparametric approach applicable to any class of models incorporating latent spatial effects in the context of georeferenced data, structuring the flexibility of the covariance function based on Bernstein polynomials (Geo-Bernstein Polynomial - GBP). The proposed methodology was evaluated through an extensive simulation study and a practical application using real data on the abundance of the American Robin (\textit{Turdus migratorius}), obtained from the North American Breeding Bird Survey (BBS).

Simulation results demonstrated that, under various data-generating scenarios (including classic structures such as the Exponential and Matérn $3/2$ models), the GBP model exhibited a remarkable ability to approximate theoretical correlation curves. The model proved to be a reliable and adaptable alternative for capturing varying degrees of smoothness or irregularity in the spatial surface when the true covariance structure of the data is unknown. In the practical application, modeling the American Robin's distribution highlighted the importance of simultaneously accommodating data overdispersion and latent spatial dependence. The model structured with the Negative Binomial distribution and GBP covariance ($m=23$) outperformed approaches based on Poisson and Bell distributions across all out-of-sample cross-validation metrics. In a direct comparison with the traditional parametric model (exponential covariance), the GBP model yielded satisfactory results; both structures produced equivalent point predictions and achieved the same Nominal Coverage Probability (CP = 97.1\%), but the proposed model generated narrower prediction intervals (lower MPIW). This property indicates that the flexibility of Bernstein polynomials allows for a reduction in the width of uncertainty bands without compromising model calibration.

 We highlight that the model yielded satisfactory results and that its formulation within the Bayesian paradigm allows for the accommodation of highly nonlinear and complex structures through hierarchical modeling, making it applicable to both Gaussian and non-Gaussian data. 
 
Future work will involve examining the conditions under which the covariance matrix resulting from the GBP model is positive definite, as well as conducting comparative studies with other semiparametric models established in the literature. Furthermore, the aim is to extend the methodology to other contexts, such as spatial survival models, and to develop an R package to disseminate the proposal and facilitate its application by researchers across various fields.

\bibliographystyle{chicago}  
\bibliography{references}  

\begin{thebibliography}{}

\bibitem[\protect\citeauthoryear{Banerjee}{Banerjee}{2015}]{banerjee2015book}
Banerjee, Sudipto;~Carlin, B. P. G. A.~E. (2015).
\newblock {\em Hierarchical Modeling and Analysis for Spatial Data, Second
  Edition\/} (2ed. ed.).
\newblock Chapman \& Hall/CRC Monographs on Statistics \& Applied Probability.
  CRC Press.

\bibitem[\protect\citeauthoryear{Bolker}{Bolker}{2008}]{bolker2008ecological}
Bolker, B.~M. (2008).
\newblock {\em Ecological models and data in R}.
\newblock Princeton: Princeton University Press.

\bibitem[\protect\citeauthoryear{Cressie}{Cressie}{2015}]{cressie1993}
Cressie, N. (2015).
\newblock {\em Statistics for Spatial Data}.
\newblock John Wiley \& Sons.

\bibitem[\protect\citeauthoryear{Diggle and Ribeiro}{Diggle and
  Ribeiro}{2007}]{diggle2007}
Diggle, P.~J. and P.~J. Ribeiro (2007).
\newblock {\em Model-based Geostatistics}.
\newblock Springer Series in Statistics. New York: Springer.

\bibitem[\protect\citeauthoryear{Gelfand, Kottas, and MacEachern}{Gelfand
  et~al.}{2005}]{gelfand2005spatial}
Gelfand, A.~E., A.~Kottas, and S.~N. MacEachern (2005).
\newblock Bayesian nonparametric spatial modeling with {D}irichlet process
  mixing.
\newblock {\em Journal of the American Statistical Association\/}~{\em
  100\/}(471), 1021--1035.

\bibitem[\protect\citeauthoryear{Genton and Gorsich}{Genton and
  Gorsich}{2002}]{genton2002nonparametric}
Genton, M.~G. and D.~J. Gorsich (2002).
\newblock Nonparametric variogram and covariogram estimation with
  {F}ourier--{B}essel matrices.
\newblock {\em Computational Statistics \& Data Analysis\/}~{\em 41\/}(1),
  47--57.

\bibitem[\protect\citeauthoryear{Ghosal}{Ghosal}{2001}]{ghosal2001posterior}
Ghosal, S. (2001).
\newblock Convergence rates for density estimation with bernstein polynomials.
\newblock {\em The Annals of Statistics\/}~{\em 29\/}(5), 1264--1280.

\bibitem[\protect\citeauthoryear{Ghosal and Van~der Vaart}{Ghosal and Van~der
  Vaart}{2017}]{ghosal2017fundamentals}
Ghosal, S. and A.~W. Van~der Vaart (2017).
\newblock {\em Fundamentals of nonparametric Bayesian inference}, Volume~44.
\newblock Cambridge University Press.

\bibitem[\protect\citeauthoryear{Hall, Fisher, and Hoffmann}{Hall
  et~al.}{1994}]{hall1994nonparametric}
Hall, P., N.~I. Fisher, and B.~Hoffmann (1994).
\newblock On the nonparametric estimation of covariance functions.
\newblock {\em The Annals of Statistics\/}, 2115--2134.

\bibitem[\protect\citeauthoryear{Hoffman and Gelman}{Hoffman and
  Gelman}{2014}]{hoffman2014nuts}
Hoffman, M.~D. and A.~Gelman (2014).
\newblock The no-u-turn sampler: adaptively setting path lengths in hamiltonian
  monte carlo.
\newblock {\em Journal of Machine Learning Research\/}~{\em 15\/}(1),
  1593--1623.

\bibitem[\protect\citeauthoryear{Jahn, Lerman, Phillips, Ryder, and
  Williams}{Jahn et~al.}{2019}]{jahn2020asymmetric}
Jahn, A.~E., S.~B. Lerman, L.~M. Phillips, T.~B. Ryder, and E.~J. Williams
  (2019).
\newblock First tracking of individual american robins (turdus migratorius)
  across seasons.
\newblock {\em The Wilson Journal of Ornithology\/}~{\em 131\/}(2), 356--359.

\bibitem[\protect\citeauthoryear{Lorentz}{Lorentz}{2012}]{book:5337}
Lorentz, G. (2012).
\newblock {\em Bernstein Polynomials}.
\newblock AMS Chelsea Publishing. American Mathematical Society.

\bibitem[\protect\citeauthoryear{Manchuk and Leuangthong}{Manchuk and
  Leuangthong}{2008}]{manchuk2008experimental}
Manchuk, J. and O.~Leuangthong (2008).
\newblock Experimental variogram calculation using bernsteing polynomials.
\newblock {\em Centre for Computational Geostatistics, Report\/}~{\em 8}.

\bibitem[\protect\citeauthoryear{McLain and Ghosh}{McLain and
  Ghosh}{2009}]{mclain2009estimation}
McLain, A.~C. and S.~K. Ghosh (2009).
\newblock Estimation of time transformation models with bernstein polynomials.
\newblock {\em Institute of Statistics Mimeo Series\/}~{\em 2625}.

\bibitem[\protect\citeauthoryear{Neal}{Neal}{2011}]{neal2011hmc}
Neal, R.~M. (2011).
\newblock Mcmc using hamiltonian dynamics.
\newblock In S.~Brooks, A.~Gelman, G.~Jones, and X.-L. Meng (Eds.), {\em
  Handbook of Markov Chain Monte Carlo}, pp.\  113--162. CRC Press.

\bibitem[\protect\citeauthoryear{O'Hara and Kotze}{O'Hara and
  Kotze}{2010}]{ohara2010do}
O'Hara, R.~B. and D.~J. Kotze (2010).
\newblock Do not log-transform count data.
\newblock {\em Methods in Ecology and Evolution\/}~{\em 1\/}(2), 118--122.

\bibitem[\protect\citeauthoryear{Osman and Ghosh}{Osman and
  Ghosh}{2012}]{osman2012nonparametric}
Osman, M. and S.~K. Ghosh (2012).
\newblock Nonparametric regression models for right-censored data using
  bernstein polynomials.
\newblock {\em Computational Statistics \& Data Analysis\/}~{\em 56\/}(3),
  559--573.

\bibitem[\protect\citeauthoryear{Quiroz, Prates, Gonzales, and Rue}{Quiroz
  et~al.}{2026}]{zaida}
Quiroz, Z., M.~Prates, C.~Gonzales, and H.~Rue (2026).
\newblock Bayesian inference of generalized linear blocknngp model of
  coregionalization through inla.
\newblock Submitted for publication.

\bibitem[\protect\citeauthoryear{{R Core Team}}{{R Core Team}}{2024}]{R}
{R Core Team} (2024).
\newblock {\em R: A Language and Environment for Statistical Computing}.
\newblock Vienna, Austria: R Foundation for Statistical Computing.

\bibitem[\protect\citeauthoryear{{Stan Development Team}}{{Stan Development
  Team}}{2026}]{rstan2026}
{Stan Development Team} (2026).
\newblock {\em {RStan}: the {R} interface to {Stan}}.
\newblock R package version 2.32.14.

\bibitem[\protect\citeauthoryear{Stein}{Stein}{1999}]{stein1999interpolation}
Stein, M.~L. (1999).
\newblock {\em Interpolation of spatial data: some theory for kriging}.
\newblock Springer Science \& Business Media.

\bibitem[\protect\citeauthoryear{Tobler}{Tobler}{1970}]{tobler1970computer}
Tobler, W.~R. (1970).
\newblock A computer movie simulating urban growth in the detroit region.
\newblock {\em Economic Geography\/}~{\em 46}, 234--240.

\bibitem[\protect\citeauthoryear{Wang and Ghosh}{Wang and
  Ghosh}{2023}]{wang2023nonparametric}
Wang, Y. and S.~K. Ghosh (2023).
\newblock Nonparametric estimation of isotropic covariance function.
\newblock {\em Journal of Nonparametric Statistics\/}~{\em 35\/}(1), 198--237.

\end{thebibliography}


\appendix

\section{Convergence analysis}
\label{app:conv}

Figure \ref{fig:traceplots} shows the traceplots for the two independently simulated chains, considering the post-burn-in period (5000 to 8000 iterations) for the intercept ($\beta_0$), the regression coefficients ($\beta_1$ and $\beta_2$), the variance parameter $\sigma^2$, and the dispersion parameter of the negative binomial ($\psi$).
\begin{figure}[H]
\begin{center}
	\includegraphics[height=8cm]{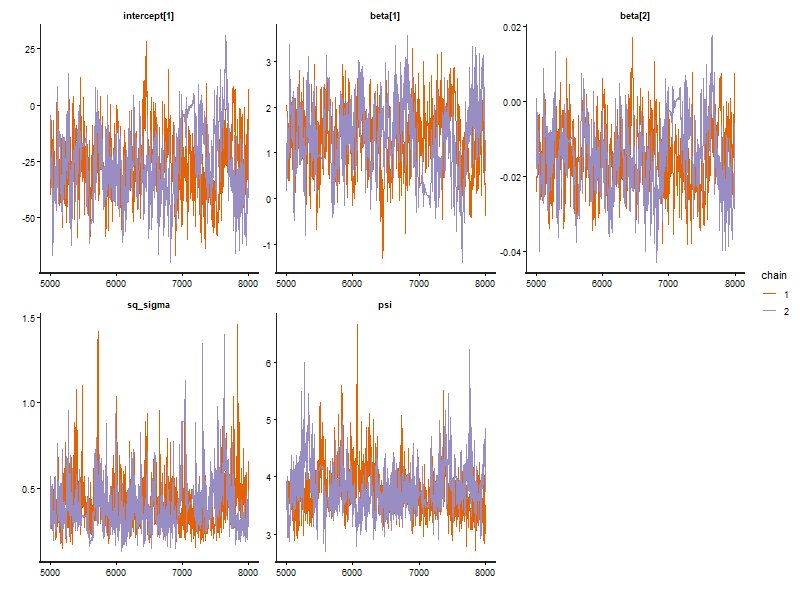}
	\caption{Traceplots of the MCMC chains for the structural parameters of the Negative GBP-Binomial model after the burn-in period.}\label{fig:traceplots}
\end{center}
\end{figure}
Visual analysis of Figure \ref{fig:traceplots} indicates that both chains exhibit good mixing and agreement across iterations. The trajectories fluctuate in a stable and stationary manner around a constant mean, showing neither upward nor downward trends nor prolonged periods of stagnation in specific regions of the parameter space. These visual diagnostics confirm the convergence of the MCMC chains, thereby validating subsequent inferences.

Table ~\ref{tab:conv_analysis} shows the presents the values of $\hat{R}$ and the effective sample size ($N_{eff}$).

\begin{table}[H]
\centering
\caption{Convergence criteria.}
\label{tab:conv_analysis}
\small
\begin{tabular}{lrrrrrr}
\toprule
\textbf{Parameter} & \textbf{$N_{eff}$} & \textbf{$\hat{R}$} \\
\midrule
$\beta_0$ &  92.280 & 1.006 \\
$\beta_1$ &  93.340 & 1.005 \\
$\beta_2$ &  96.040 & 1.004 \\
$\sigma^2$  & 173.921 & 1.006 \\
$\psi$ &  69.249 & 1.033 \\
\bottomrule
\end{tabular}
\end{table}

Additionally, it is observed that the sampling algorithm demonstrated good convergence, as the $\hat{R}$ diagnostic remained close to 1.00 for all parameters, accompanied by a satisfactory effective sample size ($N_{eff}$).



\end{document}